\numberwithin{equation}{section}
\newtheorem{theorem}{Theorem}[section]
\newtheorem{lemma}[theorem]{Lemma}
\newtheorem{proposition}[theorem]{Proposition}
\newtheorem{corollary}[theorem]{Corollary}
\newtheorem{assumption}{Assumption}
\theoremstyle{definition}
\newtheorem*{definition}{Definition}
\theoremstyle{remark}
\newtheorem{remark}{Remark}
\newtheorem{example}{Example}
\definecolor{labelkey}{rgb}{0,0.08,0.45}
\definecolor{refkey}{rgb}{0,0.6,0.0}
\definecolor{Brown}{rgb}{0.45,0.0,0.05}
\definecolor{dgreen}{rgb}{0.00,0.49,0.00}
\definecolor{dblue}{rgb}{0,0.08,0.75}
\title{Coalitions in nonatomic network congestion games}
\author{Cheng Wan\footnote{{\em Email}: cheng.wan.2005@polytechnique.org}\\ \small Combinatoire et Optimisation, Institut de Math\'emathiques de Jussieu (CNRS, UMR 7586)\\  \small Facult\'e de Math\'ematiques, Universit\'e P. et M. Curie (Paris~6), 4 place Jussieu, 75005 Paris, France}
\date{May 12, 2012}
\begin{document}
\maketitle

\begin{abstract}
This work shows that the formation of a finite number of coalitions in a nonatomic network congestion game benefits everyone. At the equilibrium of the composite game played by coalitions and individuals, the average cost to each coalition and the individuals' common cost are all lower than in the corresponding nonatomic game (without coalitions). The individuals' cost is lower than the average cost to any coalition. Similarly, the average cost to a coalition is lower than that to any larger coalition. Whenever some members of a coalition become individuals, the individuals' payoff is increased. In the case of a unique coalition, both the average cost to the coalition and the individuals' cost are decreasing with respect to the size of the coalition. In a sequence of composite games, if a finite number of coalitions are fixed, while the size of the remaining coalitions goes to zero, the equilibria of these games converge to the equilibrium of a composite game played by the same fixed coalitions and the remaining individuals.
\end{abstract}

\textbf{Key words:} coalition, nonatomic game, atomic splittable game, composite game, network congestion game, routing game, Wardrop equilibrium, composite equilibrium

\section{Introduction}\label{sect1}
This paper considers the impact of introducing coalitions in network congestion games played by nonatomic individuals, namely nonatomic routing games. These games belong to a more general class of noncooperative games played by a continuum of anonymous identical players, each of whom has a negligible effect on the others.

First, let us cite some historic references on routing games, in particular, on coalitions in such games.

Beckman, McGuire and Winston \cite{Bec53} first formulated Wardrop equilibrium (Wardrop \cite{War52}) in nonatomic congestion games as an optimal solution of a convex programming problem, and thus proved its existence under weak conditions on the cost functions.

A coalition of nonatomic individuals of total weight $T$ behaves the same way as an atomic player who holds a flow of weight $T$ that can be split and sent by different paths. Routing games with finitely many atomic players holding splittable flow (called atomic splittable games) were first examined by Haurie and Marcotte \cite{Hau85}. They focused on the asymptotic behavior of Nash equilibria in such games. By characterizing a Nash equilibrium in an atomic splittable game and a Wardrop equilibrium in the corresponding nonatomic game by two variational inequalities, they proved that the former converges to the latter, when the number of atomic players tends to infinity. This result will be extended in this paper.

Harker \cite{Ha88} first studied {\em composite games} (that he called mixed games), where atomic players holding splittable flow (or coalitions) and nonatomic individuals play together. He characterized a composite-type equilibrium by a variational inequality, and thus proved the existence of a solution under some weak conditions on the cost functions as well as its uniqueness under more stringent conditions.

Orda, Rom and Shimkin \cite{Orda93} made a detailed study on the uniqueness and other properties of Nash equilibria in atomic splittable games on {\em two-terminal parallel-link} networks. This specific setting will be adopted in this paper, where their results will be extended. Richman and Shimkin \cite{Rich07} extended their results to composite congestion games in nearly parallel-link networks.

For the impact of coalitions on the equilibrium costs, Cominetti, Correa and Stier-Moses \cite{Com09} showed that, in the atomic splittable case where the atomic players are identical, the social cost at the equilibrium of the game is bounded by that of the corresponding nonatomic game, under weak conditions on the cost functions.

Hayrapetyan, Tardos and Wexler \cite{Hay06} proved that the formation of coalitions (that they called collusion) reduces the social cost in a two-terminal parallel-link network. Although stronger conditions on the cost functions are needed in this paper, our results prove that the formation of coalitions benefits everyone.

Apart from the consequence of the formation of coalitions on the equilibrium costs, this paper also studies how this impact varies with the structure of coalitions.

\subsection{A Sketch of the model}
A continuum of nonatomic individuals are commuters in a two-terminal parallel-arc (directed) network. Their common origin and common destination are the only two vertices, which are connected by a finite set of parallel arcs. The per-unit traffic cost of an arc depends only on the total weight of the flow on it. A pure strategy of an individual is an arc by which she goes from the origin to the destination. Nash equilibria in such nonatomic games are usually called {\em Wardrop equilibria} (WE for short)~\cite{War52}. At a WE, the arc chosen by an individual costs no more than any other available arc, hence it has the lowest cost in the network. The individuals have the same cost at a WE.

A composite routing game is played by a finite number of disjoint coalitions formed by some of the individuals and the remaining individuals. A coalition is specified by its size. Within a coalition, a coordinator assigns an arc to each member, with the objective of minimizing their total cost. An equilibrium in this game is called {\em composite equilibrium} (CE for short), since it is Nash-type for the coalitions and Wardrop-type for the individuals. All the individuals have the same cost at a CE, while the average costs to the coalitions may differ.

\subsection{Main results}
After recalling the {\em existence} and the {\em uniqueness} of the CE of a composite game under certain conditions on the cost functions, five main results are obtained:
\begin{enumerate}
  \item At the CE, the average social cost, the individuals' cost and the average cost to each coalition are lower than the equilibrium cost at the WE of the corresponding nonatomic game.
  \item At the CE, the average cost to a coalition is lower than that to any other larger coalition. If a coalition sends flow on a certain arc, then any other larger coalition sends more on it.
  \item If some members quit a coalition to become individuals, the individuals' cost is increased at the corresponding CE.
  \item If there is only one coalition, the social cost, the average cost to the coalition, and the individuals' cost at the CE are all decreasing with respect to the size of the unique coalition.
  \item If, in a sequence of composite games, a finite number of coalitions are fixed, and the maximum size of the remaining coalitions tends to zero, the sequence of equilibrium of these games converges to the equilibrium of a game played by the same fixed coalitions and the remaining individuals.
\end{enumerate}

\subsection{Organization of the work}
The paper is organized in the following way. Section~\ref{sect2} provides a detailed description of the model as well as characterizations of the CE in different formulations. The existence and the uniqueness of the CE will be recalled. Section~\ref{sect3} analyzes some important properties of the CE. Section~\ref{sect4} deals with the impact of the formation of coalitions by comparing the players' costs at the WE of the corresponding nonatomic game and those at the CE. Section~\ref{sect5} considers the impact of the composition of the players on the CE costs: first, how the equilibrium costs vary with the size of a unique coalition; second, how the individuals' cost varies when some members of a coalition become individuals. Section~\ref{sect6} focuses on the asymptotic behavior of CE, by fixing some coalitions while letting the remaining coalitions vanish. Section~\ref{sect7} discusses some problems for future research.

\section{The model and characterization of an equilibrium}\label{sect2}
\subsection{Model and notations}
\paragraph{Network and arc costs $(\mathcal{R},\,\mathbf{c})$.}
Let the set of identical anonymous nonatomic individuals be described by the unit real interval $I=[\,0,\,1]$, endowed with the Lebesgue measure $\mu$. The players' common origin is vertex $O$, and their common destination is vertex $D$. The finite set of parallel arcs between $O$ and $D$ is denoted by $\mathcal{R}$, with $R=|\mathcal{R}|$ its cardinality. Let $\mathbf{c}=(c_{r})_{r\in \mathcal{R}}$ be the vector of the per-unit arc cost functions: for every arc $r$, $x \mapsto c_{r}(x)$ is a real function defined on a neighborhood $U$ of $[\,0,\,1]$. The per-unit cost of an arc only depends on the total weight of the flow on it. The network is characterized by the pair $(\mathcal{R},\,\mathbf{c})$.

The following assumption is made {\em throughout} this paper.
\begin{assumption}\label{cost_assumption_1}
For every arc $r$ in $\mathcal{R}$, the cost function $c_{r}$ is strictly increasing, convex and continuously differentiable on $U$, and nonnegative on $[\,0,\,1]$.
\end{assumption}
\begin{figure}[!htbp]
\begin{center}
\begin{tikzpicture}
 \node[draw,circle,scale=0.7] (O)at(-2,0) {$O$};
 \node[draw,circle,scale=0.7] (D)at(2,0) {$D$};
 \node (P)[scale=0.7] at (-2.8,0) {$1$};
\node (Q)[scale=0.7] at (2.8,0) {};
\draw [->,>=stealth] (P) to (O);
\draw [->,>=stealth] (D) to (Q);
\draw[->,>=latex] (O) to[bend left=50] (D);
\draw[->,>=latex] (O) to[bend left=20] (D);
\draw[->,>=latex] (O) to[bend right=20] (D);
\draw[->,>=latex] (O) to[bend right=50] (D);
\draw [loosely dashed] (0,1.3) -- (0,-1.3)node[below,scale=0.7]{$\mathcal{R}$};
\end{tikzpicture}
\end{center}
\end{figure}
\paragraph{Composite routing game $\Gamma(\mathcal{R},\mathbf{c},\mathbf{T})$.}
Suppose that $K$ coalitions are formed in the set of individuals $I$, with $K\in \mathbb{N}=\{0,1,2,\cdots\}$. The family of coalitions is denoted by $\mathcal{K}=\{1,\,\ldots,\,K\}$. Every coalition behaves like an atomic player holding a splittable flow. The remaining individuals are independent nonatomic players. For a coalition $k\in \mathcal{K}$, the measurable set of its members is denoted by $I^{k}$, a subset of $I$, and its total weight is denoted by $T^{k}=\mu(I^{k})$. Let $I^{0}$ denote the set of individuals so that $I^{0}=[\,0,\,1]\backslash \cup_{k\in \mathcal{K}}I^{k}$, and its weight is $T^{0}=\mu(I^{0})=1-\sum_{k\in \mathcal{K}}T^{k}$. Without loss of generality, it is assumed that $T^{1}\geq T^{2}\geq \cdots \geq T^{K}$. Let us define $\mathbf{T}=(\,T^{0};\,T^{1},\ldots,\,T^{K}\,)$. Let $\Gamma(\mathcal{R},\mathbf{c},\mathbf{T})$ be the composite routing game played by these $K$ coalitions and the remaining individuals in the network $(\mathcal{R},\,\mathbf{c})$.

Two particular cases should be mentioned. First, if $I^{0}=[\,0,\,1]$ and $K=0$, there is no coalition so that the game is a nonatomic one, denoted simply by $\Gamma(\mathcal{R},\mathbf{c})$, and the equilibria there are WE. Second, if $I^{0}$ is empty, i.e. $T^{0}=0$, the game is an atomic splittable one with $K$ atomic players, and the equilibria there are {\em Nash equilibria} (NE for short) in its usual sense; in particular, if $K=1$, i.e, there is a global coalition, the equilibrium is obtained by solving the optimization problem of searching for the social optimum.

\paragraph{Strategies and flow configurations.}
In the game $\Gamma(\mathcal{R},\mathbf{c},\mathbf{T})$, as the individuals are identical and anonymous, only the total weight sent by each coalition on each arc counts. The strategy profile of the individuals ({\em resp.} the strategy of coalition $k$) is specified by the {\em flow configuration} (flow for short) $\mathbf{x}^{0}$ ({\em resp.} $\mathbf{x}^{k}$) defined by
\begin{equation*}
\mathbf{x}^{0}=(x^{0}_{r})_{r\in \mathcal{R}}\quad
(resp. \quad \mathbf{x}^{k}=(x^{k}_{r})_{r\in \mathcal{R}}),
\end{equation*}
where $x^{0}_{r}$ ({\em resp.} $x^{k}_{r}$) is the total weight of the individuals ({\em resp.} of coalition $k$) on arc $r$.

A strategy profile is specified by $\mathbf{x}=(\mathbf{x}^{0},\,\mathbf{x}^{1},\,\ldots,\,\mathbf{x}^{K})$, a point in $\mathbb{R}^{(1+K)\times R}$.

The {\em feasible flow set} of the individuals ({\em resp.} of coalition $k$) is a convex compact subset of $ \mathbb{R}^{R}$, defined by
\begin{align*}
F^{0} & =\bigl\{\mathbf{x}^{0}\in \mathbb{R}^{R}\;|\; \forall \,r\in \mathcal{R},\,x^{0}_{r}\geq 0\,; \,\sum_{r\in \mathcal{R}}x^{0}_{r}=T^{0}\bigr\},\\
\bigl(\text{{\em resp.}} \quad F^{k}
 &= \bigl\{\mathbf{x}^{k}\in \mathbb{R}^{R}\;|\; \forall \,r\in \mathcal{R},\,x^{k}_{r}\geq 0\,; \,\sum_{r\in \mathcal{R}}x^{k}_{r}=T^{k}\bigr\}\bigr).
\end{align*}

The {\em feasible flow set} $F$ of the game $\Gamma(\mathcal{R},\mathbf{c},\mathbf{T})$ is a convex compact subset of $ \mathbb{R}^{(1+K)\times R}$, defined by $F=F^{0} \times F^{1} \times \cdots \times F^{K}$.

The {\em aggregate flow} $\mathbf{x}'$ induced by $\mathbf{x}$ is a vector in $\mathbb{R}^{R}$, defined by $\mathbf{x}'=(x_{r})_{r\in \mathcal{R}}$, where $x_{r}=x^{0}_{r}+\sum_{k\in \mathcal{K}}x^{k}_{r}$ is the aggregate weight on arc $r$.

For coalition $k$, the vector $\mathbf{x}^{-k}$ is a point in $F^{-k}= \prod_{l\in \{0\}\cup \mathcal{K}\setminus \{k\}}F^{l}$, defined by $\mathbf{x}^{-k}=(\mathbf{x}^{l})_{l\in \{0\}\cup \mathcal{K}\setminus \{k\}}$. For all arc $r$, define $x^{-k}_{r}=x^{0}_{r}+\sum_{l\in \mathcal{K}\setminus \{k\}}x^{l}_{r}$.

\paragraph{Average costs and marginal costs.}
The {\em average cost} to the individuals, the {\em average cost} to coalition $k$ and the {\em average social cost} are respectively defined by
\[Y^{0}(\mathbf{x}) = \frac{1}{T^{0}} \sum_{r\in \mathcal{R}}x^{0}_{r}c_{r}(x_{r}),\;\;Y^{k}(\mathbf{x})= \frac{1}{T^{k}}\sum_{r\in \mathcal{R}}x^{k}_{r}c_{r}(x_{r}),\;\;Y(\mathbf{x})= \sum_{r\in \mathcal{R}}x_{r}c_{r}(x_{r}).\]
As the total weight of the players is normalized to 1, the average social cost is just the social cost.


The total cost to coalition $k$ is denoted by $u^{k}(\mathbf{x})=T^{k}\cdot Y^{k}(\mathbf{x})=\sum_{r\in \mathcal{R}} x^{k}_{r}c_{r}(x_{r})$.

Following Harker~\cite{Ha88}, the {\em marginal cost} function of coalition $k$ is defined by
\begin{equation*}
 \hat{c}^{k}(\mathbf{x})=(\hat{c}^{k}_{r}(\mathbf{x}))_{r\in \mathcal{R}},\;\,\text{where } \, \hat{c}_{r}^{k}(\mathbf{x})=c_{r}(x_{r})+x^{k}_{r}c'_{r}(x_{r}).
\end{equation*}

Notice that $\hat{c}^{k}(\mathbf{x})$ is the gradient of $u^{k}(\mathbf{x})$ with respect to $\mathbf{x}^{k}$. More precisely,
\begin{equation*}
\hat{c}^{k}(\mathbf{x})=\nabla_{\mathbf{x}^{k}}\,  u^{k}(\mathbf{x}^{k},\,\mathbf{x}^{-k})=\Bigl( \,\frac{\partial u^{k}}{\partial x^{k}_{r}}\bigl(\mathbf{x}\bigr) \Bigr)_{r\in \mathcal{R}}\,.
\end{equation*}

\subsection{Characterizing equilibria: existence and uniqueness}
The following definition of a CE (Harker \cite{Ha88}) consists of two parts: the first for the individuals and the second for the coalitions.
\begin{definition}[Composite equilibrium]
A point $\mathbf{x}^{*}=(\,\mathbf{x}^{*0},\,\mathbf{x}^{*1},\,\ldots,\,\mathbf{x}^{*K}\,)$ in $F$ is a {\em CE} of the game $\Gamma(\mathcal{R},\mathbf{c},\mathbf{T})$ if
\begin{align}
&\forall\, r\in \mathcal{R},\quad \text{ if }\;x^{*0}_{r}>0, \;\text{ then } \;r\,\in\, \arg\min_{s\in \mathcal{R}} \, c_{s}\left(x^{*}_{s}\right);\label{mixed-cond-indiv}\\
&\forall\, k\in \mathcal{K},\quad \mathbf{x}^{*k} \, \text{ minimizes }\,u^{k}(\mathbf{x}^{k},\,\mathbf{x}^{*-k}) \;\text{ on } F^{k}.\label{mixed-cond-coal}
\end{align}
\end{definition}
\begin{proposition} [Characterization of a CE] \label{eq_characterize}
The following are equivalent.
\begin{enumerate}
  \item $\mathbf{x}^{*}=(\,\mathbf{x}^{*0},\,\mathbf{x}^{*1},\,\ldots,\,\mathbf{x}^{*K}\,)$ in $F$ is a CE.
  \item (marginal cost formulation) \,$\mathbf{x}^{*}=(\,\mathbf{x}^{*0},\,\mathbf{x}^{*1},\,\ldots,\,\mathbf{x}^{*K}\,)$ in $F$ satisfies
\begin{align}
\label{cost_nonatom}
\forall\, r \in \mathcal{R},\quad  \text{ if }\; x^{*0}_{r}> 0, \quad\text{ then }\; \forall s\in\mathcal{R},\; c_{r}(x^{*}_{r}) &\leq c_{s}(x^{*}_{s});\\
\nonumber
\forall\, k \in \mathcal{K},\quad \text{ if }\; x^{*k}_{r}> 0, \quad \text{ then }\; \forall s\in\mathcal{R},\;\hat{c}_{r}^{k}(x^{*}_{r}) &\leq \hat{c}_{s}^{k}(x^{*}_{s}),\\
\label{cost_atom}
\text{i.e. }\qquad c_{r}(x^{*}_{r})+x^{*k}_{r}c'_{r}(x^{*}_{r}) &\leq c_{s}(x^{*}_{s})+x^{*k}_{s}c'_{s}(x^{*}_{s}).
\end{align}
  \item (variational inequality formulation) \, $\mathbf{x}^{*}=(\,\mathbf{x}^{*0},\,\mathbf{x}^{*1},\,\ldots,\,\mathbf{x}^{*K}\,)$ in $F$ satisfies
\begin{equation}\label{VIP_mix}
\langle\,\mathbf{c}(\mathbf{x}^{*}),\,\mathbf{x}^{0}-\mathbf{x}^{*0}\,\rangle+\sum_{k\in \mathcal{K}}\langle\,\hat{\mathbf{c}}^{k}(\mathbf{x}^{*}),\,\mathbf{x}^{k}-\mathbf{x}^{*k}\,\rangle\geq 0, \quad \forall\; \mathbf{x}= (\,\mathbf{x}^{0},\,\mathbf{x}^{1},\,\ldots,\,\mathbf{x}^{K}\,) \in F.
\end{equation}
\end{enumerate}
Here, $\langle\, \cdot,\cdot \rangle$ stands for the standard inner product operator on the Euclidean spaces.
\end{proposition}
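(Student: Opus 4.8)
The plan is to establish the two equivalences (1)$\Leftrightarrow$(2) and (2)$\Leftrightarrow$(3) simultaneously, by exploiting the product structure $F=F^{0}\times F^{1}\times\cdots\times F^{K}$ to split every condition into one statement about the individuals' block $\mathbf{x}^{0}$ and one statement about each coalition's block $\mathbf{x}^{k}$, and then treating each block by the classical argument for minimizing a function over a simplex.

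First I would record an elementary lemma: for any vector $\mathbf{a}=(a_{r})_{r\in\mathcal{R}}\in\mathbb{R}^{R}$, any $T\geq 0$, and any $\mathbf{y}^{*}$ in the simplex $\Delta_{T}=\{\mathbf{y}\in\mathbb{R}^{R}_{\geq 0}:\sum_{r}y_{r}=T\}$, one has $\langle\mathbf{a},\,\mathbf{y}-\mathbf{y}^{*}\rangle\geq 0$ for all $\mathbf{y}\in\Delta_{T}$ if and only if $a_{r}=\min_{s\in\mathcal{R}}a_{s}$ for every $r$ with $y^{*}_{r}>0$. The ``if'' direction is immediate, since then $\langle\mathbf{a},\,\mathbf{y}^{*}\rangle=T\min_{s}a_{s}\leq\langle\mathbf{a},\,\mathbf{y}\rangle$; the ``only if'' direction follows by shifting an $\varepsilon$ of mass from an arc $r$ in the support of $\mathbf{y}^{*}$ with $a_{r}$ not minimal to an arc achieving the minimum, which strictly decreases $\langle\mathbf{a},\,\cdot\,\rangle$ and contradicts the inequality.

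Then I would handle the two types of blocks. For the individuals, condition (1) is by definition the Wardrop condition \eqref{mixed-cond-indiv}, which is verbatim \eqref{cost_nonatom}; and applying the lemma with $\mathbf{a}=\mathbf{c}(\mathbf{x}^{*})$, $T=T^{0}$, $\mathbf{y}^{*}=\mathbf{x}^{*0}$ shows this is equivalent to $\langle\,\mathbf{c}(\mathbf{x}^{*}),\,\mathbf{x}^{0}-\mathbf{x}^{*0}\,\rangle\geq 0$ for all $\mathbf{x}^{0}\in F^{0}$. For a fixed coalition $k$, the first thing to check is that $\mathbf{x}^{k}\mapsto u^{k}(\mathbf{x}^{k},\,\mathbf{x}^{*-k})=\sum_{r}x^{k}_{r}\,c_{r}(x^{k}_{r}+x^{*-k}_{r})$ is convex and continuously differentiable on $F^{k}$: each summand $t\mapsto t\,c_{r}(t+x^{*-k}_{r})$ has second derivative $2c'_{r}(t+x^{*-k}_{r})+t\,c''_{r}(t+x^{*-k}_{r})\geq 0$ for $t\geq 0$ by Assumption \ref{cost_assumption_1}, and its gradient at $\mathbf{x}^{*k}$ is exactly $\hat{\mathbf{c}}^{k}(\mathbf{x}^{*})$. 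Hence, by the first-order characterization of a minimizer of a differentiable convex function over a convex set, $\mathbf{x}^{*k}$ minimizes $u^{k}(\cdot,\,\mathbf{x}^{*-k})$ on $F^{k}$ (condition \eqref{mixed-cond-coal}) if and only if $\langle\,\hat{\mathbf{c}}^{k}(\mathbf{x}^{*}),\,\mathbf{x}^{k}-\mathbf{x}^{*k}\,\rangle\geq 0$ for all $\mathbf{x}^{k}\in F^{k}$, and the lemma with $\mathbf{a}=\hat{\mathbf{c}}^{k}(\mathbf{x}^{*})$, $T=T^{k}$, $\mathbf{y}^{*}=\mathbf{x}^{*k}$ rewrites the latter as \eqref{cost_atom}.

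Finally I would assemble the three formulations. Since $F$ is the Cartesian product $F^{0}\times\cdots\times F^{K}$, the single inequality \eqref{VIP_mix} over all $\mathbf{x}\in F$ is equivalent to the conjunction of the blockwise inequalities: the ``only if'' part follows by testing \eqref{VIP_mix} with a feasible point that differs from $\mathbf{x}^{*}$ in just one block, and the ``if'' part by summing the nonnegative blockwise terms. That conjunction is precisely what was shown above to be equivalent, block by block, both to \eqref{cost_nonatom}--\eqref{cost_atom} and to conditions \eqref{mixed-cond-indiv}--\eqref{mixed-cond-coal}, which closes the loop (degenerate blocks with $T^{0}=0$ or $T^{k}=0$ are trivial, as the corresponding feasible sets are singletons). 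The only step that is more than bookkeeping is the convexity of $u^{k}(\cdot,\,\mathbf{x}^{*-k})$, which is where Assumption \ref{cost_assumption_1} is genuinely used: it is what makes the first-order condition sufficient, not merely necessary, so that \eqref{cost_atom} really captures coalition optimality rather than just stationarity.
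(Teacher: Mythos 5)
Your argument follows essentially the same route as the paper: establish convexity of $u^{k}(\cdot,\mathbf{x}^{*-k})$ so that coalition optimality \eqref{mixed-cond-coal} is equivalent to the blockwise first-order variational inequality, convert each blockwise inequality into the ``used arcs have minimal (marginal) cost'' conditions \eqref{cost_nonatom}--\eqref{cost_atom} via a simplex argument (your lemma is a cleaner packaging of the paper's subtraction of $\hat{c}^{k}=\min_{r}\hat{c}^{k}_{r}(\mathbf{x}^{*})$ plus its auxiliary-flow construction), and pass between \eqref{VIP_mix} and the blockwise inequalities using the product structure of $F$, testing with deviations in a single block. The structure, the decomposition and the conclusions are all correct.

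The one step where your justification exceeds the stated hypotheses is the convexity check: you compute the second derivative $2c'_{r}(t+x^{*-k}_{r})+t\,c''_{r}(t+x^{*-k}_{r})$, but Assumption~\ref{cost_assumption_1} only makes $c_{r}$ continuously differentiable (and convex, strictly increasing, nonnegative on $[\,0,1]$); $c''_{r}$ need not exist. This is easily repaired without changing your plan: either note that on the feasible range $t\mapsto t$ and $t\mapsto c_{r}(t+x^{*-k}_{r})$ are both nonnegative, nondecreasing and convex, so their product is convex; or do as the paper does and derive the first-order inequality \eqref{u_convex} directly, by multiplying the convexity inequality $c_{r}(y^{k}_{r}+x^{-k}_{r})\geq c_{r}(x_{r})+(y^{k}_{r}-x^{k}_{r})c'_{r}(x_{r})$ by $y^{k}_{r}\geq 0$ and using $(y^{k}_{r}-x^{k}_{r})^{2}c'_{r}(x_{r})\geq 0$. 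Since that gradient inequality is all you actually invoke (it is exactly what makes the first-order condition sufficient for \eqref{mixed-cond-coal}), the rest of your proof goes through unchanged.
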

\begin{proof}
(i) $\Leftrightarrow$ (ii): For the individuals, \eqref{cost_nonatom} is simply a reformulation of  \eqref{mixed-cond-indiv}. For the coalitions, in order to show that \eqref{mixed-cond-coal} is equivalent to \eqref{cost_atom}, let us first prove that for coalition $k$, $u^{k}(\mathbf{x}^{k},\,\mathbf{x}^{-k})$ is convex in $\mathbf{x}^{k}$ for any given $\mathbf{x}^{-k}$ in $F^{-k}$.

Indeed, for any $r$ in $\mathcal{R}$, since $c_{r}$ is convex and strictly increasing,
\begin{align*}
c_{r}(y^{k}_{r}+x^{-k}_{r})\,&\geq \, c_{r}(x^{k}_{r}+x^{-k}_{r})+(y^{k}_{r}-x^{k}_{r})\,c'_{r}(x^{k}_{r}+x^{-k}_{r})\\
\Rightarrow \qquad y^{k}_{r}\,c_{r}(y^{k}_{r}+x^{-k}_{r})\, &\geq \, y^{k}_{r}\,c_{r}(x_{r})+y^{k}_{r}\,(y^{k}_{r}-x^{k}_{r})\,c'_{r}(x_{r}) \geq  y^{k}_{r}\,c_{r}(x_{r})+x^{k}_{r}\,(y^{k}_{r}-x^{k}_{r})\,c'_{r}(x_{r})\\
& = \,x^{k}_{r}\,c_{r}(x_{r})+(y^{k}_{r}-x^{k}_{r})\,\left[ c_{r}(x_{r})+x^{k}_{r}\,c'_{r}(x_{r})\right].\\
\Rightarrow \,\sum_{r\in \mathcal{R}}\,y^{k}_{r}\,c_{r}(y^{k}_{r}+x^{-k}_{r})\, &\geq \, \sum_{r\in \mathcal{R}}\,x^{k}_{r}\,c_{r}(x_{r})+\sum_{r\in \mathcal{R}}\,(y^{k}_{r}-x^{k}_{r})\,\left[ c_{r}(x_{r})+x^{k}_{r}\,c'_{r}(x_{r})\right],
\end{align*}
which implies that
\begin{equation}\label{u_convex}
 u^{k}(\mathbf{y}^{k},\,\mathbf{x}^{-k})\geq u^{k}(\mathbf{x}^{k},\,\mathbf{x}^{-k}) +
\langle \, \nabla_{\mathbf{x}^{k}}\,  u^{k}(\mathbf{x}^{k},\,\mathbf{x}^{-k}),\,\mathbf{y}^{k}-\mathbf{x}^{k}\,\rangle,\quad \forall \, \,\mathbf{x}^{k}, \,\mathbf{y}^{k}\,\in F^{k}.
\end{equation}

Thus, $\mathbf{x}^{*k}$ minimizes the convex function $u^{k}(\mathbf{x}^{k},\,\mathbf{x}^{*-k})$ on the convex compact set $F^{k}$ if, and only if, $\langle\,\nabla_{\mathbf{x}^{k}}\, u^{k}(\mathbf{x}^{*}),\,\mathbf{x}^{k}-\mathbf{x}^{*k}\,\rangle\geq 0$ for all $\mathbf{x}^{k}\,\in F^{k}$ or, equivalently,
\begin{equation}\label{VIP_coal}
 \langle\,\hat{\mathbf{c}}^{k}(\mathbf{x}^{*}),\,\mathbf{x}^{k}-\mathbf{x}^{*k}\,\rangle\geq 0,\quad \forall \; \mathbf{x}^{k}\,\in F^{k}.
\end{equation}

Let us set $\hat{c}^{k}=\min_{r\in \mathcal{R}}\hat{c}_{r}^{k}(\mathbf{x}^{*})$. Then, $\sum_{r\in \mathcal{R}} (\,\hat{c}^{k}_{r}(\mathbf{x}^{*})-\hat{c}^{k})\,(x^{k}_{r}-x^{*k}_{r})=\sum_{r\in \mathcal{R}} \,\hat{c}^{k}_{r}(\mathbf{x}^{*}) \,(x^{k}_{r}-x^{*k}_{r})-  \hat{c}^{k}\sum_{r\in \mathcal{R}} \,(x^{k}_{r}-x^{*k}_{r})=\sum_{r\in \mathcal{R}} \,\hat{c}^{k}_{r}(\mathbf{x}^{*}) \,(x^{k}_{r}-x^{*k}_{r})-\hat{c}^{k} (T^{k}-T^{k})=\langle\,\hat{\mathbf{c}}^{k}(\mathbf{x}^{*}),\,\mathbf{x}^{k}-\mathbf{x}^{*k}\,\rangle$. Consequently, \eqref{VIP_coal} is equivalent to
\begin{equation}\label{VIP_coal_bis}
\sum_{r\in \mathcal{R}} (\,\hat{c}^{k}_{r}(\mathbf{x}^{*})-\hat{c}^{k})\,(x^{k}_{r}-x^{*k}_{r})\geq 0.
\end{equation}

It remains to show that \eqref{VIP_coal_bis} is equivalent to \eqref{cost_atom}.

\eqref{cost_atom} $\Rightarrow$ \eqref{VIP_coal_bis}: According to \eqref{cost_atom},
\begin{equation*}
(\,\hat{c}^{k}_{r}(\mathbf{x}^{*})-\hat{c}^{k})\,(x^{k}_{r}-x^{*k}_{r})=
\begin{cases}
\,(\,\hat{c}^{k}_{r}(\mathbf{x}^{*})-\hat{c}^{k})\,x^{k}_{r}\geq 0,\;& \text{ if }  x^{*k}_{r}=0,\\
\,0,& \text{ if } x^{*k}_{r}>0.
\end{cases}
\end{equation*}

Thus, $\sum_{r\in \mathcal{R}} (\,\hat{c}^{k}_{r}(\mathbf{x}^{*})-\hat{c}^{k})\,(x^{k}_{r}-x^{*k}_{r})\geq 0$.

\eqref{VIP_coal_bis} $\Rightarrow$ \eqref{cost_atom}: Let us define an auxiliary flow $\mathbf{x}^{k}$ in $F^{k}$ as follows: ${x}^{k}_{r}=0$ if $\hat{c}^{k}_{r}(\mathbf{x}^{*})>\hat{c}^{k}$, and ${x}^{k}_{r}=\frac{T^{k}}{m}$ if $\hat{c}^{k}_{r}(\mathbf{x}^{*})=\hat{c}^{k}$. Here $m=|\{r\in \mathcal{R}\,|\, \hat{c}^{k}_{r}(\mathbf{x}^{*})=\hat{c}^{k}\}|$, the number of arcs whose marginal cost to coalition $k$ at $\mathbf{x}^{*}$ are the smallest in the network. Then, for this specific $\mathbf{x}^{k}$, \eqref{VIP_coal_bis} implies that $\sum_{r\in \mathcal{R}, \hat{c}^{k}_{r}(\mathbf{x}^{*})>\hat{c}^{k}} (\,\hat{c}^{k}_{r}(\mathbf{x}^{*}\,)-\hat{c}^{k})\,(-x^{*k}_{r})\geq 0$. Consequently, $x^{*k}_{r}=0$ if $\hat{c}^{k}_{r}(\mathbf{x}^{*})>\hat{c}^{k}$, which leads to \eqref{cost_atom}.

(ii) $\Leftrightarrow$ (iii): By the same argument used above for the equivalence between \eqref{cost_atom} and \eqref{VIP_coal}, one can show that \eqref{cost_nonatom} is equivalent to
\begin{equation}\label{VIP_indiv}
\left\langle\,\mathbf{c}(\mathbf{x}^{*}),\,\mathbf{x}^{0}-\mathbf{x}^{*0}\,\right\rangle\geq 0, \quad \forall\; \mathbf{x}^{0}\in F^{0}.
\end{equation}

The variational inequalities \eqref{VIP_coal} and \eqref{VIP_indiv} imply immediately \eqref{VIP_mix}. For the converse, it is enough to take an $\mathbf{x}= (\,\mathbf{x}^{0},\,\mathbf{x}^{1},\,\ldots,\,\mathbf{x}^{K}\,)$ in $F$ such that $\mathbf{x}^{l}=\mathbf{x}^{*l}$ for all $l$ in $\mathcal{K}$ ({\em resp.} $\mathbf{x}^{l}=\mathbf{x}^{*l}$ for all $l$ in  $\{0\}\cup \mathcal{K}\setminus\{k\}$) to get \eqref{VIP_indiv} ({\em resp.} \eqref{VIP_coal}).

Thus, one has shown that \eqref{cost_nonatom} and \eqref{cost_atom} are equivalent to \eqref{VIP_mix}.
\end{proof}
\begin{remark} \rm
(iii) has been proven for the specific cases of NE and WE as well as for CE: a WE was characterized as the solution of a variational inequality problem by Smith~\cite{Smi79} and Dafermos~\cite{Daf80}, and as the solution of a nonlinear complementarity problem by Aashtiani and Magnanti~\cite{Aas81}. Variational inequalities were used to characterize a NE in atomic splittable games by Haurie and Marcotte~\cite{Hau85}, and a CE in composite games by Harker~\cite{Ha88}.

Condition \eqref{cost_atom} shows that the marginal costs $(\hat{c}_{r}^{k})_{r\in \mathcal{R}}$ play the same role for coalition $k$ as $(c_{r})_{r\in \mathcal{R}}$ for the individuals: at the CE, all the arcs used by coalition $k$ have the lowest marginal cost and, {\em a fortiori}, the same one. For flow $\mathbf{x}\in F$, $\hat{c}_{r}^{k}(\mathbf{x})=c_{r}(x_{r})+x^{k}_{r}c'_{r}(x_{r})$ is a function of only two variables $x_{r}^{k}$ and $x_{r}$. Besides, according to Assumption~\ref{cost_assumption_1}, it is strictly increasing in both of them.
\end{remark}

\begin{theorem}[Existence and uniqueness of CE]\label{unique_mix}
In a composite game, a CE exists, and it is unique.
\end{theorem}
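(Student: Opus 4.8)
The plan is to treat existence and uniqueness separately, working throughout with the variational‑inequality characterization~(iii) of Proposition~\ref{eq_characterize}.

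\emph{Existence.} Introduce the operator $\Phi\colon F\to\mathbb R^{(1+K)\times R}$ whose $(0,r)$‑component is $c_{r}(x_{r})$ and whose $(k,r)$‑component is $\hat c^{k}_{r}(\mathbf x)=c_{r}(x_{r})+x^{k}_{r}c'_{r}(x_{r})$, so that \eqref{VIP_mix} is exactly $\langle\Phi(\mathbf x^{*}),\mathbf x-\mathbf x^{*}\rangle\ge 0$ for all $\mathbf x\in F$. By Assumption~\ref{cost_assumption_1} every $c_{r}$ is $C^{1}$, so $\Phi$ is continuous, and $F=F^{0}\times\cdots\times F^{K}$ is nonempty, convex and compact. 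The Hartman--Stampacchia theorem (equivalently, Brouwer's fixed‑point theorem applied to $\mathbf x\mapsto\mathrm{proj}_{F}(\mathbf x-\Phi(\mathbf x))$) then yields a solution $\mathbf x^{*}\in F$, which is a CE by Proposition~\ref{eq_characterize}. This is essentially Harker's argument~\cite{Ha88} and presents no difficulty.

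\emph{Uniqueness, reduction to the aggregate flow.} Let $\mathbf x$ and $\tilde{\mathbf x}$ be two CE. I would first show that they coincide as soon as their aggregate flows coincide. Indeed, if $x_{r}=\tilde x_{r}$ for all $r$, then $c_{r}(x_{r})=c_{r}(\tilde x_{r})$ and $c'_{r}(x_{r})=c'_{r}(\tilde x_{r})$, and, since $c'_{r}>0$, condition~\eqref{cost_atom} forces $x^{k}_{r}=\bigl(\hat c^{k}-c_{r}(x_{r})\bigr)^{+}/c'_{r}(x_{r})$ with $\hat c^{k}=\min_{s}\hat c^{k}_{s}(\mathbf x)$, and likewise for $\tilde{\mathbf x}$. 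The map $\theta\mapsto\sum_{r}\bigl(\theta-c_{r}(x_{r})\bigr)^{+}/c'_{r}(x_{r})$ is continuous, vanishes for $\theta\le\min_{r}c_{r}(x_{r})$, and is strictly increasing beyond that point, so the equation ``sum $=T^{k}>0$'' has a unique solution; hence $\hat c^{k}$ is the same at both equilibria, $x^{k}_{r}=\tilde x^{k}_{r}$ for all $k,r$, and then $x^{0}_{r}=x_{r}-\sum_{k}x^{k}_{r}=\tilde x^{0}_{r}$, so $\mathbf x=\tilde{\mathbf x}$.

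\emph{Uniqueness of the aggregate flow---the crux.} Writing \eqref{VIP_mix} at $\mathbf x$ tested against $\tilde{\mathbf x}$, and at $\tilde{\mathbf x}$ tested against $\mathbf x$, and adding, one obtains
\[\sum_{r}\bigl(c_{r}(x_{r})-c_{r}(\tilde x_{r})\bigr)(x_{r}-\tilde x_{r})+\sum_{k\in\mathcal K}\sum_{r}\bigl(x^{k}_{r}c'_{r}(x_{r})-\tilde x^{k}_{r}c'_{r}(\tilde x_{r})\bigr)(x^{k}_{r}-\tilde x^{k}_{r})\le 0.\]
The first sum is $\ge 0$ by strict monotonicity of the $c_{r}$, and $>0$ unless the aggregate flows coincide, so it would suffice that the second sum be $\ge 0$; but, unlike the purely nonatomic case $K=0$, this second sum can be negative for general feasible flows ($\Phi$ is not monotone), so one must use the parallel‑arc structure together with the fact that $\mathbf x$ and $\tilde{\mathbf x}$ are equilibria of the \emph{same} game. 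I expect this to be the hard part. The route I would take is the exchange argument of Orda--Rom--Shimkin~\cite{Orda93} and Richman--Shimkin~\cite{Rich07}: assume, for contradiction, that $x_{r}>\tilde x_{r}$ on a nonempty set of arcs (hence also $x_{r}<\tilde x_{r}$ on another, since the total weights agree), substitute $x^{k}_{r}c'_{r}(x_{r})=(\hat c^{k}-c_{r}(x_{r}))^{+}$ to decouple the coalition terms arc by arc, and compare, across the arcs where the two aggregate flows differ, the individuals' common cost~\eqref{cost_nonatom} and each coalition's common marginal cost~\eqref{cost_atom} at the two equilibria, so as to reach a contradiction. With the aggregate flow shown to be unique, the reduction step above finishes the proof.
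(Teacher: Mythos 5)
Your existence argument is exactly the paper's: the CE is characterized by the variational inequality \eqref{VIP_mix}, the operator is continuous under Assumption~\ref{cost_assumption_1}, $F$ is convex and compact, and Hartman--Stampacchia (the paper invokes Kinderlehrer--Stampacchia, Theorem 3.1) gives a solution. That part is fine. Your reduction lemma for uniqueness is also correct and worth keeping: if the two equilibria have the same aggregate flow, then \eqref{cost_atom} forces $x^{k}_{r}=\bigl(\hat c^{k}-c_{r}(x_{r})\bigr)^{+}/c'_{r}(x_{r})$, the map $\theta\mapsto\sum_{r}(\theta-c_{r}(x_{r}))^{+}/c'_{r}(x_{r})$ pins down $\hat c^{k}$ from $T^{k}$, and the individuals' flow is recovered by subtraction; so uniqueness does reduce to uniqueness of the aggregate.

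The gap is the step you yourself flag as the crux: uniqueness of the aggregate flow is never proved. You correctly observe that the summed variational inequalities do not close the argument because the operator $\Phi$ is not monotone (the coalition terms $\sum_{k}\sum_{r}(x^{k}_{r}c'_{r}(x_{r})-\tilde x^{k}_{r}c'_{r}(\tilde x_{r}))(x^{k}_{r}-\tilde x^{k}_{r})$ can be negative), but what follows is only a declared intention to run an exchange argument ``\`a la Orda--Rom--Shimkin / Richman--Shimkin,'' with no actual comparison of the equilibrium quantities carried out and no contradiction derived. This is precisely the nontrivial part: the paper itself does not prove it either, it cites Richman and Shimkin, Theorem~4.1, as a black box, and their proof (topological uniqueness for atomic/composite users on nearly parallel networks) is a genuinely involved argument, not a routine substitution of $x^{k}_{r}c'_{r}(x_{r})=(\hat c^{k}-c_{r}(x_{r}))^{+}$. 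So either invoke Richman--Shimkin's theorem explicitly, as the paper does, in which case your reduction lemma becomes redundant, or actually execute the exchange argument: e.g.\ assume $x_{r}>\tilde x_{r}$ on some arc, use \eqref{cost_nonatom} and \eqref{cost_atom} at both equilibria to compare $c^{0}$, the $\hat c^{k}$'s, and the per-arc flows across the sets $\{r: x_{r}>\tilde x_{r}\}$ and $\{r: x_{r}<\tilde x_{r}\}$ (in the spirit of Lemma~\ref{me_flow_2} and of step (a)--(f) in the proof of Theorem~\ref{efficiency_size}(iii)), and derive a contradiction with $\sum_{r}x_{r}=\sum_{r}\tilde x_{r}=1$. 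As written, the proposal establishes existence and a reduction, but not uniqueness.
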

\begin{proof}
The variational inequality formulation for CE \eqref{VIP_mix} is used to prove its existence. Theorem 3.1 in Kinderlehrer and Stampacchia~\cite[p.12]{Kin86} states that the variational inequality problem \eqref{VIP_mix} admits a solution if $F$ is a convex compact set, and if $\hat{\mathbf{c}}^{k}$ and $\mathbf{c}$ are continuous. According to Assumption~\ref{cost_assumption_1}, these conditions are satisfied.

For the uniqueness of CE, see Richman and Shimkin \cite[Theorem~4.1]{Rich07}.
\end{proof}
\begin{remark} \rm
For the nonatomic routing game $\Gamma(\mathcal{R},\mathbf{c})$, a WE exists if the cost functions $c_{r}$'s are continuous. If they are furthermore strictly increasing on $U$, then the WE is unique. See Patriksson~\cite[Theorems 2.4, 2.5]{Pat94} for a proof.
\end{remark}
\section{A detailed study on CE}\label{sect3}
Let us consider a composite game $\Gamma(\mathcal{R},\mathbf{c},\mathbf{T})$. This section focuses on the properties of its unique CE, denoted by $\mathbf{x}$ here and in Section~\ref{sect4}.

First, some notations are recalled or given.

$\mathcal{R}^{0}(\mathbf{x})=\{ r \in \mathcal{R}\,|\, x^{0}_{r}>0\}\subset \mathcal{R}$ is the support of $\mathbf{x}^{0}$.

$\mathcal{R}^{k}(\mathbf{x})=\{ r \in \mathcal{R}\,|\, x^{k}_{r}>0\}\subset \mathcal{R}$ is the support of $\mathbf{x}^{k}$, for coalition $k$.

$c^{0}(\mathbf{x})$ is the lowest arc cost in the network.

$\hat{c}^{k}(\mathbf{x})$ is the marginal cost to coalition $k$ of every arc used by it.

$Y^{0}(\mathbf{x})$ is the common cost to all the individuals. $Y^{0}(\mathbf{x})=c^{0}(\mathbf{x})$.

$Y^{k}(\mathbf{x})$ is the average cost to coalition $k$.

$\underline{Y}^{k}(\mathbf{x})=\min_{r \in \mathcal{R}^{k}} c_{r}(x_{r})$ is the lowest arc cost of the arcs used by coalition $k$.

$Y(\mathbf{x})$ is the social cost.

All the statements made in this section and Section~\ref{sect4} are to be understood at the CE $\mathbf{x}$. And $\mathbf{x}$ will often be omitted if it does not cause confusion.\\

The following facts follow immediately from \eqref{cost_nonatom} and \eqref{cost_atom}. They will be repeatedly referred to in this work without further explanation.

\noindent{\sc Facts.}
\begin{align*}
 c_{r}(x_{r})=c^{0},\;\text{ if } r\, \in \mathcal{R}^{0};\quad &c_{r}(x_{r})\geq c^{0},\;\text{ if } r \in \mathcal{R}\setminus \mathcal{R}^{0}.\\
\forall\, k\in \mathcal{K},\quad \hat{c}^{k}_{r}(\mathbf{x})=\hat{c}^{k},\;\text{ if } r\, \in \mathcal{R}^{k};\quad & \hat{c}^{k}_{r}(\mathbf{x})\geq \hat{c}^{k},\;\text{ if } r \in \mathcal{R}\setminus \mathcal{R}^{k}.
\end{align*}

The following lemma states that an arc used by a coalition costs less than any arc not used by it.
\begin{lemma}\label{equation2}
For any coalition $k$, for any arc $r$ in $\mathcal{R}^{k}$ and any arc $s$ in $\mathcal{R}\setminus \mathcal{R}^{k}$, $c_{r}(x_{r})<c_{s}(x_{s})$.
\end{lemma}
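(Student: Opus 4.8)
The plan is to read the inequality straight off the marginal-cost characterization of the CE (the \textsc{Facts} following Proposition~\ref{eq_characterize}); the only genuine content is to locate the one place where the inequality becomes \emph{strict}. Fix a coalition $k$. If $\mathcal{R}^{k}$ or $\mathcal{R}\setminus\mathcal{R}^{k}$ is empty the statement is vacuous, so assume both are nonempty and pick $r\in\mathcal{R}^{k}$ and $s\in\mathcal{R}\setminus\mathcal{R}^{k}$. By definition of $\mathcal{R}^{k}$ we have $x^{k}_{r}>0$, hence $x_{r}\ge x^{k}_{r}>0$; and since $s\notin\mathcal{R}^{k}$ we have $x^{k}_{s}=0$, so that $\hat{c}^{k}_{s}(\mathbf{x})=c_{s}(x_{s})+x^{k}_{s}c'_{s}(x_{s})=c_{s}(x_{s})$.

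Next comes the key strict inequality $\hat{c}^{k}_{r}(\mathbf{x})>c_{r}(x_{r})$. Because $c_{r}$ is strictly increasing and convex on the open set $U\supset[0,1]$ and $x_{r}$ is an interior point of $U$, we have $c'_{r}(x_{r})>0$: a convex function has nondecreasing derivative, so $c'_{r}(x_{r})=0$ would force $c'_{r}\le 0$, and hence $c'_{r}\equiv 0$, on an initial segment, contradicting strict monotonicity. Together with $x^{k}_{r}>0$ this yields $\hat{c}^{k}_{r}(\mathbf{x})=c_{r}(x_{r})+x^{k}_{r}c'_{r}(x_{r})>c_{r}(x_{r})$. (Equivalently, one may invoke the Remark after Proposition~\ref{eq_characterize}: $\hat{c}^{k}_{r}$ is strictly increasing in its argument $x^{k}_{r}$, so its value at $x^{k}_{r}>0$ strictly exceeds its value at $x^{k}_{r}=0$, which is $c_{r}(x_{r})$.)

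Finally I would chain everything through the \textsc{Facts}: since $r\in\mathcal{R}^{k}$ we have $\hat{c}^{k}_{r}(\mathbf{x})=\hat{c}^{k}$, and since $s\notin\mathcal{R}^{k}$ we have $\hat{c}^{k}_{s}(\mathbf{x})\ge\hat{c}^{k}$. Hence
\[
c_{s}(x_{s})=\hat{c}^{k}_{s}(\mathbf{x})\ \ge\ \hat{c}^{k}\ =\ \hat{c}^{k}_{r}(\mathbf{x})\ >\ c_{r}(x_{r}),
\]
which is exactly the claimed $c_{r}(x_{r})<c_{s}(x_{s})$.

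I do not expect a real obstacle here. The only step that deserves care is the justification of $c'_{r}(x_{r})>0$ — i.e. the strictness of $\hat{c}^{k}_{r}(\mathbf{x})>c_{r}(x_{r})$ — which is the one spot where Assumption~\ref{cost_assumption_1} (strict monotonicity together with convexity on a neighborhood of $[0,1]$) is used in an essential way, rather than just the complementarity relations defining the equilibrium.
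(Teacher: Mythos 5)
Your proof is correct and follows essentially the same route as the paper: the paper's one-line argument is exactly the chain $c_{r}(x_{r})<c_{r}(x_{r})+x^{k}_{r}c'_{r}(x_{r})=\hat{c}^{k}\leq c_{s}(x_{s})$, using $x^{k}_{r}>0$, $x^{k}_{s}=0$ and the equilibrium Facts. Your only addition is to spell out why $c'_{r}(x_{r})>0$ under Assumption~\ref{cost_assumption_1}, which the paper leaves implicit.
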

\begin{proof} Since $x^{k}_{r}>0$ and $x^{k}_{s}=0$, $c_{r}(x_{r}) < c_{r}(x_{r})+x^{k}_{r}c'_{r}(x_{r}) = \hat{c}^{k} \leq c_{s}(x_{s})$.
\end{proof}

The next lemma shows that an arc used by individuals is also used by all the coalitions. Besides, the average cost to any coalition is not lower than the individuals' cost.
\begin{lemma}\label{me_flow_1}
For any coalition $k$,
\begin{enumerate}
 \item $\mathcal{R}^{0} \subset \mathcal{R}^{k}$, i.e. for all $r\in \mathcal{R}$, if $x^{0}_{r}>0$, then $x^{k}_{r}>0$.
 \item $c^{0}< \hat{c}^{k}$.
 \item $Y^{0} = \underline{Y}^{k} \leq Y^{k}$.
\end{enumerate}
\end{lemma}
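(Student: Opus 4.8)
The plan is to handle the three items in order, since (ii) and (iii) both rest on (i).

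For (i) I would argue by contradiction. Suppose some arc $r$ carries individuals but not coalition $k$, i.e.\ $x^{0}_{r}>0$ while $x^{k}_{r}=0$; note $\mathcal{R}^{k}\neq\emptyset$ because $T^{k}>0$, so pick $s\in\mathcal{R}^{k}$. On the one hand, Lemma~\ref{equation2} (an arc used by $k$ is strictly cheaper than an arc not used by $k$) gives $c_{s}(x_{s})<c_{r}(x_{r})$. On the other hand, $x^{0}_{r}>0$ together with \eqref{cost_nonatom} forces $c_{r}(x_{r})\le c_{s}(x_{s})$, since arc $r$ then has the least arc cost in the network. These two inequalities are incompatible, so $x^{k}_{r}>0$, which is exactly $\mathcal{R}^{0}\subseteq\mathcal{R}^{k}$. (If one prefers not to invoke Lemma~\ref{equation2}, the same contradiction comes from writing \eqref{cost_atom} for coalition $k$ at $s$ and at $r$, namely $c_{s}(x_{s})+x^{k}_{s}c'_{s}(x_{s})\le c_{r}(x_{r})+x^{k}_{r}c'_{r}(x_{r})=c_{r}(x_{r})$, and using $x^{k}_{s}c'_{s}(x_{s})>0$.)

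For (ii), take any $r\in\mathcal{R}^{k}$; then $x^{k}_{r}>0$, hence $x_{r}\ge x^{k}_{r}>0$, and by Assumption~\ref{cost_assumption_1} (strict monotonicity plus convexity on the neighbourhood $U$ of $0$) one has $c'_{r}(x_{r})>0$. Therefore $\hat{c}^{k}=\hat{c}^{k}_{r}(\mathbf{x})=c_{r}(x_{r})+x^{k}_{r}c'_{r}(x_{r})>c_{r}(x_{r})\ge c^{0}$. Alternatively, once (i) is available one may simply pick $r\in\mathcal{R}^{0}\subseteq\mathcal{R}^{k}$, where $c_{r}(x_{r})=c^{0}$, and conclude $\hat{c}^{k}=c^{0}+x^{k}_{r}c'_{r}(x_{r})>c^{0}$.

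For (iii), recall from the Facts that $Y^{0}=c^{0}$. By (i), $\mathcal{R}^{0}\subseteq\mathcal{R}^{k}$, so choosing $r\in\mathcal{R}^{0}$ yields $r\in\mathcal{R}^{k}$ with $c_{r}(x_{r})=c^{0}$; hence $\underline{Y}^{k}=\min_{s\in\mathcal{R}^{k}}c_{s}(x_{s})\le c^{0}$, while $\underline{Y}^{k}\ge c^{0}$ trivially because $c^{0}$ is the minimal arc cost over all of $\mathcal{R}$. Thus $\underline{Y}^{k}=c^{0}=Y^{0}$. Finally, $Y^{k}=\frac{1}{T^{k}}\sum_{r\in\mathcal{R}^{k}}x^{k}_{r}c_{r}(x_{r})$ is a convex combination of the values $c_{r}(x_{r})\ge\underline{Y}^{k}$ (the weights $x^{k}_{r}/T^{k}$ being nonnegative and summing to $1$ over $\mathcal{R}^{k}$), so $Y^{k}\ge\underline{Y}^{k}$. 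I expect (i) to be the only real content: it is the one place where one genuinely needs both the Wardrop-type condition \eqref{cost_nonatom} for the individuals and the Nash/marginal-cost condition \eqref{cost_atom} for the coalition, linked through the strict inequality produced by strict convexity of the cost functions — precisely what Lemma~\ref{equation2} already packages. Items (ii) and (iii) are then short book-keeping.
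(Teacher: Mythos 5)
Your proof is correct and follows essentially the same route as the paper: your parenthetical argument for (i) via \eqref{cost_atom} is literally the paper's contradiction, and citing Lemma~\ref{equation2} instead is just a repackaging of that same one-line computation, while your (ii) (using $c_{r}(x_{r})\ge c^{0}$ for $r\in\mathcal{R}^{k}$ rather than picking $r\in\mathcal{R}^{0}$) and (iii) are the paper's arguments with slightly more bookkeeping. No gaps.
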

\begin{proof} (i) Suppose that $x^{0}_{r}>0$. If $x^{k}_{r}=0$, there is another arc $s$ such that $x^{k}_{s}>0$. Then, $c_{r}(x_{r})\geq \hat{c}^{k}(\mathbf{x})= c_{s}(x_{s})+x^{k}_{s}c'_{s}(x_{s})>c_{s}(x_{s})$. However, $x^{0}_{r}>0$, hence $c_{r}(x_{r}) \leq c_{s}(x_{s})$, a contradiction.

(ii) Take $r$ in $\mathcal{R}^{0}$. By (i), $x^{k}_{r}>0$. Thus, $\hat{c}^{k}=c_{r}(x_{r})+x^{k}_{r}c'(x_{r})>c_{r}(x_{r})=c^{0}$.

(iii) The individuals take the arcs with the lowest cost, hence $Y^{0} \leq \underline{Y}^{k} \leq Y^{k}$. And (i) implies that $Y^{0}= \underline{Y}^{k}$.
\end{proof}

The next lemma states that an arc used by a coalition is also used by any larger coalition, and the larger one sends more flow on it.

\begin{lemma}\label{me_flow_2}
Let two coalitions $k$ and $l$ be such that $T^{k} < T^{l}$. Then the following are true.
  \begin{enumerate}
    \item $\mathcal{R}^{k}\subset \mathcal{R}^{l}$, i.e. for all $r\in \mathcal{R}$, if $x^{k}_{r}>0$, then $x^{l}_{r}>0$.
    \item $\hat{c}^{k} < \hat{c}^{l}$.
    \item For any arc $r$, $x^{k}_{r} \leq x^{l}_{r}$, and the inequality is strict if $x^{k}_{r}>0$.
    \item $Y^{k} \leq Y^{l}$, and the equality holds if, and only if, $Y^{l}=Y^{k}=Y^{0}$.
  \end{enumerate}

If $T^{k} = T^{l}$, all the inequalities or inclusions above become equalities.
\end{lemma}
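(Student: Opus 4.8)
The plan is to prove the four items in the order (ii), (i), (iii), (iv), each using the previous ones, and then to treat the limiting case $T^{k}=T^{l}$ separately. The only ingredients are the equilibrium conditions collected in the Facts above (for coalition $k$: $c_{r}(x_{r})+x^{k}_{r}c'_{r}(x_{r})=\hat c^{k}$ when $x^{k}_{r}>0$, and $c_{r}(x_{r})\geq\hat c^{k}$ when $x^{k}_{r}=0$, and likewise for $l$), the positivity $c'_{r}(x_{r})>0$ on any arc carrying flow, and Theorem~\ref{unique_mix}. Two elementary consequences are used repeatedly: on an arc $r$ used by \emph{both} coalitions, subtracting the two used-arc equations yields the identity $(x^{l}_{r}-x^{k}_{r})\,c'_{r}(x_{r})=\hat c^{l}-\hat c^{k}$; and an arc used by only one coalition is pinned down by the unused-arc inequality of the other. (I tacitly assume $T^{k}>0$; the case $T^{k}=0$ is degenerate and trivial.)

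For (ii) I argue by contradiction: assume $\hat c^{k}\geq\hat c^{l}$. If some arc had $x^{l}_{r}>0=x^{k}_{r}$, then $c_{r}(x_{r})<\hat c^{l}\leq\hat c^{k}$ from $l$'s used-arc equation, yet $c_{r}(x_{r})\geq\hat c^{k}$ from $k$'s unused-arc inequality --- impossible; hence $\mathcal R^{l}\subset\mathcal R^{k}$. On $\mathcal R^{l}$ both coalitions are active, so the identity gives $x^{l}_{r}\leq x^{k}_{r}$; on $\mathcal R^{k}\setminus\mathcal R^{l}$ clearly $x^{l}_{r}=0\leq x^{k}_{r}$; elsewhere both vanish. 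Summing over $\mathcal R$ gives $T^{l}\leq T^{k}$, contradicting $T^{k}<T^{l}$. Thus $\hat c^{k}<\hat c^{l}$; set $\delta:=\hat c^{l}-\hat c^{k}>0$. Now (i) is immediate: an arc in $\mathcal R^{k}\setminus\mathcal R^{l}$ would satisfy both $c_{r}(x_{r})<\hat c^{k}$ (from $k$) and $c_{r}(x_{r})\geq\hat c^{l}>\hat c^{k}$ (from $l$). And (iii) follows: if $x^{k}_{r}>0$ then $r\in\mathcal R^{k}\subset\mathcal R^{l}$ by (i), so $x^{l}_{r}>0$ and the identity gives $x^{l}_{r}-x^{k}_{r}=\delta/c'_{r}(x_{r})>0$; if $x^{k}_{r}=0$ there is nothing to prove.

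The core of the argument is (iv). Write $Y^{l}-Y^{k}=\sum_{r\in\mathcal R^{l}}(p_{r}-q_{r})\,c_{r}(x_{r})$ with $p_{r}:=x^{l}_{r}/T^{l}$ and $q_{r}:=x^{k}_{r}/T^{k}$ (so $q_{r}=0$ on $\mathcal R^{l}\setminus\mathcal R^{k}$); since $\sum_{r\in\mathcal R^{l}}p_{r}=\sum_{r\in\mathcal R^{l}}q_{r}=1$, one has $\sum_{r\in\mathcal R^{l}}(p_{r}-q_{r})=0$, hence $Y^{l}-Y^{k}=\sum_{r\in\mathcal R^{l}}(p_{r}-q_{r})\bigl(c_{r}(x_{r})-\theta\bigr)$ for any scalar $\theta$. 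The aim is to choose $\theta$ so that $p_{r}-q_{r}$ and $c_{r}(x_{r})-\theta$ have the same sign for every $r\in\mathcal R^{l}$, which gives $Y^{l}\geq Y^{k}$. On $\mathcal R^{k}$, substituting $x^{k}_{r}c'_{r}(x_{r})=\hat c^{k}-c_{r}(x_{r})$ and $x^{l}_{r}=x^{k}_{r}+\delta/c'_{r}(x_{r})$ yields
\[
T^{k}T^{l}(p_{r}-q_{r})=T^{k}x^{l}_{r}-T^{l}x^{k}_{r}=\frac{T^{l}-T^{k}}{c'_{r}(x_{r})}\bigl(c_{r}(x_{r})-\theta\bigr),\qquad \theta:=\hat c^{k}-\frac{T^{k}\delta}{T^{l}-T^{k}},
\]
so on $\mathcal R^{k}$, $p_{r}-q_{r}$ is a strictly positive multiple of $c_{r}(x_{r})-\theta$; and on $\mathcal R^{l}\setminus\mathcal R^{k}$ one has $p_{r}-q_{r}=p_{r}>0$ while $c_{r}(x_{r})\geq\hat c^{k}>\theta$, so the signs match there too. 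Hence every term is $\geq 0$ and $Y^{k}\leq Y^{l}$. For the equality clause, $Y^{k}=Y^{l}$ forces every term to vanish; but the term is strictly positive on $\mathcal R^{l}\setminus\mathcal R^{k}$, so that set is empty, i.e. $\mathcal R^{k}=\mathcal R^{l}$, and then the displayed identity forces $c_{r}(x_{r})=\theta$ on this common support; consequently $Y^{k}=Y^{l}=\theta$, and since every arc outside $\mathcal R^{l}$ has cost $\geq\hat c^{l}>\theta$, we get $\theta=c^{0}=Y^{0}$. The converse ($Y^{l}=Y^{k}=Y^{0}$ implies equality) is trivial.

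Finally, when $T^{k}=T^{l}$: interchanging $\mathbf x^{k}$ and $\mathbf x^{l}$ in the CE $\mathbf x$ produces a feasible configuration (as $F^{k}=F^{l}$) which still satisfies the CE conditions, since the aggregate flow and the cost functions $u^{k},u^{l}$ are invariant under relabelling two coalitions of equal size; by the uniqueness in Theorem~\ref{unique_mix} this configuration equals $\mathbf x$, so $\mathbf x^{k}=\mathbf x^{l}$ and all four comparisons collapse to equalities. The main obstacle is item (iv): one must identify the correct threshold $\theta$ and then verify the sign synchronisation on two qualitatively different sets of arcs --- an algebraic identity on the arcs used by both coalitions, and only the coarser inequality bound on the arcs used by the larger one --- and extract the precise equality characterisation from the strictness on that second set.
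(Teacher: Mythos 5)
Your proof is correct, and while items (i)--(iii) follow essentially the paper's own route (the same two ingredients: the identity $(x^{l}_{r}-x^{k}_{r})\,c'_{r}(x_{r})=\hat{c}^{l}-\hat{c}^{k}$ on arcs used by both coalitions, the unused-arc inequality otherwise, and a contradiction obtained by summing flows --- you merely prove (ii) before (i)), your treatment of (iv) and of the equal-weight case is genuinely different. For (iv), the paper first compares $Y^{k}$ with the average cost $\tilde{Y}^{l}$ of coalition $l$ restricted to $\mathcal{R}^{k}$ (using Lemma~\ref{equation2} to handle $\mathcal{R}^{l}\setminus\mathcal{R}^{k}$) and then reduces $Y^{k}\leq\tilde{Y}^{l}$ to a Cauchy--Schwarz inequality, extracting the equality case from the equality condition of Cauchy--Schwarz together with Lemma~\ref{me_flow_1}(iii); you instead write $Y^{l}-Y^{k}=\sum_{r\in\mathcal{R}^{l}}(p_{r}-q_{r})\bigl(c_{r}(x_{r})-\theta\bigr)$ with the threshold $\theta=\hat{c}^{k}-T^{k}\delta/(T^{l}-T^{k})$, which makes each term on $\mathcal{R}^{k}$ a positive multiple of $(c_{r}(x_{r})-\theta)^{2}$ and each term on $\mathcal{R}^{l}\setminus\mathcal{R}^{k}$ strictly positive --- a direct sum-of-nonnegative-terms argument that bypasses both $\tilde{Y}^{l}$ and Cauchy--Schwarz and delivers the equality characterization in one stroke (your identification of the common cost with $c^{0}=Y^{0}$ via the bound $c_{r}(x_{r})\geq\hat{c}^{l}>\theta$ off $\mathcal{R}^{l}$ is a valid substitute for the paper's appeal to Lemma~\ref{me_flow_1}(iii)). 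For $T^{k}=T^{l}$, the paper reruns its computations and invokes symmetry of the inclusion, whereas you swap the two coalitions' flows and invoke the uniqueness of the CE (Theorem~\ref{unique_mix}); this is shorter and legitimate, since uniqueness of the full profile is exactly what that theorem asserts and it is established before this lemma, so there is no circularity. Your tacit assumptions ($T^{k}>0$, and $c'_{r}>0$ on the relevant range) are the same ones the paper makes implicitly, so they are not gaps.
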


\begin{proof}
(i) Suppose that $T^{k} <T^{l}$. If $\mathcal{R}^{k}\not \subset \mathcal{R}^{l}$, there is some $r$ such that $x^{k}_{r}>0$ but $x^{l}_{r}=0$. Hence, $\hat{c}^{k}=c_{r}(x_{r})+x^{k}_{r}c'_{r}(x_{r})>c_{r}(x_{r}) \geq \hat{c}^{l}$. In particular, $\hat{c}^{k}> \hat{c}^{l}$.

For all $s$ in $\mathcal{R}\setminus\mathcal{R}^{k}$, $c_{s}(x_{s}) \geq \hat{c}^{k} > \hat{c}^{l}$, which implies that $x^{l}_{s}=0$. In consequence, $\mathcal{R}\setminus\mathcal{R}^{k}\subset \mathcal{R}\setminus\mathcal{R}^{l}$ or, equivalently, $\mathcal{R}^{l}\subset \mathcal{R}^{k}$.

For all $r$ in $\mathcal{R}^{l}$ and, {\em a fortiori}, in $\mathcal{R}^{k}$, $\hat{c}^{k}=c_{r}(x_{r})+x^{k}_{r}c'_{r}(x_{r})$ and $\hat{c}^{l}=c_{r}(x_{r})+x^{l}_{r}c'_{r}(x_{r})$. Hence, $x^{k}_{r}-x^{l}_{r}=(\hat{c}^{k} - \hat{c}^{l})/c'_{r}(x_{r})>0$, so that $x^{k}_{r}>x^{l}_{r}$. As a result, $T^{l}=\sum_{r \in \mathcal{R}^{l}}x^{l}_{r}<\sum_{r \in \mathcal{R}^{l}}x^{k}_{r}\leq T^{k}$, a contradiction.

Therefore, $\mathcal{R}^{k}\subset \mathcal{R}^{l}$.

Suppose that $T^{k}=T^{l}$. The above proof is still valid. Thus, $\mathcal{R}^{k}\subset \mathcal{R}^{l}$ and, by symmetry, $\mathcal{R}^{l}\subset \mathcal{R}^{k}$. This leads to $\mathcal{R}^{k}= \mathcal{R}^{l}$.

(ii) and (iii) Suppose that $T^{k} <T^{l}$. By (i), $\mathcal{R}^{k}\subset \mathcal{R}^{l}$. There are two cases.

{\sc Case 1.} $\mathcal{R}^{k}=\mathcal{R}^{l}$. Given $r$ in $\mathcal{R}^{k}=\mathcal{R}^{l}$, $\hat{c}^{k}=c_{r}(x_{r})+x^{k}_{r}c'_{r}(x_{r})$ and $\hat{c}^{l}=c_{r}(x_{r})+x^{l}_{r}c'_{r}(x_{r})$. It follows that $x^{l}_{r}-x^{k}_{r}=(\hat{c}^{l} - \hat{c}^{k})/c'_{r}(x_{r})$. Thus, $0<T^{l} -T^{k}=\sum_{r \in \mathcal{R}^{k}}(x^{l}_{r}-x^{k}_{r})=(\hat{c}^{l} - \hat{c}^{k})\,\sum_{r \in \mathcal{R}^{k}}1/c'_{r}(x_{r})$ and, consequently, $\hat{c}^{l} > \hat{c}^{k}$.

{\sc Case 2.} $\mathcal{R}^{k}\subset \mathcal{R}^{l}$ but $\mathcal{R}^{k}\neq \mathcal{R}^{l}$. Take $s$ in $\mathcal{R}^{l}\setminus \mathcal{R}^{k}$. Then, $\hat{c}^{l}=c_{s}(x_{s})+x^{l}_{s}c'_{s}(x_{s})>c_{s}(x_{s}) \geq \hat{c}^{k}$.

In both cases, $\hat{c}^{l} > \hat{c}^{k}$. For all $r$ in $\mathcal{R}^{k}$, $x^{l}_{r}-x^{k}_{r}=(\hat{c}^{l} - \hat{c}^{k})/c'_{r}(x_{r})>0$; in particular, $x^{k}_{r}< x^{l}_{r}$. And for all $r$ in $\mathcal{R}\setminus \mathcal{R}^{k}$, $0=x^{k}_{r}\leq x^{l}_{r}$.

Suppose that $T^{k} =T^{l}$.  By (i), $\mathcal{R}^{k}= \mathcal{R}^{l}$. On the one hand, the same argument as for Case 1 leads to $\hat{c}^{l} =\hat{c}^{k}$ and $x^{l}_{r}=x^{k}_{r}$ for all $r$ in $\mathcal{R}^{k}= \mathcal{R}^{l}$. On the other hand, $x^{l}_{r}=x^{k}_{r}=0$ for all $r$ in $\mathcal{R}\setminus \mathcal{R}^{k}$.

(iv) Suppose that $T^{k} < T^{l}$. According to (i), $\mathcal{R}^{k}\subset \mathcal{R}^{l}$.

Set $\tilde{Y}^{l}=\sum_{r \in \mathcal{R}^{k}}x^{l}_{r}c_{r}(x_{r})/\sum_{r \in \mathcal{R}^{k}}x^{l}_{r}$, the average cost to coalition $l$ on $\mathcal{R}^{k}$. By Lemma~\ref{equation2}, the arcs in $\mathcal{R}^{k}$ cost strictly less than those in $\mathcal{R}\setminus\mathcal{R}^{k}$. One deduces that $\tilde{Y}^{l}=Y^{l}$ if $\mathcal{R}^{k}$ is equal to $\mathcal{R}^{l}$, and $\tilde{Y}^{l}<Y^{l}$ if $\mathcal{R}^{k}$ is a proper subset of $\mathcal{R}^{l}$.

Now, let us show that $Y^{k}\leq \tilde{Y}^{l}$.
\begin{align*}
\tilde{Y}^{l}
 & =\frac{\sum_{r \in \mathcal{R}^{k}}x^{l}_{r}c_{r}(x_{r})}{\sum_{r \in \mathcal{R}^{k}}x^{l}_{r}} \\
 & =\frac{\sum_{r \in \mathcal{R}^{k}}x^{k}_{r}c_{r}(x_{r})+\sum_{r \in \mathcal{R}^{k}}(x^{l}_{r}-x^{k}_{r})c_{r}(x_{r})}{\sum_{r \in \mathcal{R}^{k}}x^{k}_{r}+\sum_{r \in \mathcal{R}^{k}}(x^{l}_{r}-x^{k}_{r})} \\
 & =\frac{Y^{k}T^{k}+\sum_{r \in \mathcal{R}^{k}}(x^{l}_{r}-x^{k}_{r})c_{r}(x_{r})}{T^{k}+\sum_{r \in \mathcal{R}^{k}}(x^{l}_{r}-x^{k}_{r})}.
\end{align*}

It follows from (iii) that, for all $r$ in $\mathcal{R}^{k}$, $x^{l}_{r}-x^{k}_{r}>0$. The relation $Y^{k}\leq \tilde{Y}^{l}$ is thus equivalent to the inequality
\begin{equation}\label{equation1}
Y^{k}\leq \frac{\sum_{r \in \mathcal{R}^{k}}(x^{l}_{r}-x^{k}_{r})c_{r}(x_{r})}{\sum_{r \in \mathcal{R}^{k}}(x^{l}_{r}-x^{k}_{r})}.
\end{equation}

For $r$ in $\mathcal{R}^{k}$, $x^{l}_{r}-x^{k}_{r}=(\hat{c}^{l} - \hat{c}^{k})/c'_{r}(x_{r})$ and $c_{r}(x_{r})=\hat{c}^{k}-x^{k}_{r}c'_{r}(x_{r})$. Inequality~\eqref{equation1} can thus be written as
\begin{align}
 &\frac{\sum_{r \in \mathcal{R}^{k}}x^{k}_{r}c_{r}(x_{r})}{\sum_{r \in \mathcal{R}^{k}}x^{k}_{r}}
\leq \frac{\sum_{r \in \mathcal{R}^{k}}c_{r}(x_{r})(\hat{c}^{l} - \hat{c}^{k})/c'_{r}(x_{r})}{\sum_{r \in \mathcal{R}^{k}}(\hat{c}^{l} - \hat{c}^{k})/c'_{r}(x_{r})}
=\frac{\sum_{r \in \mathcal{R}^{k}}c_{r}(x_{r})/c'_{r}(x_{r})}{\sum_{r \in \mathcal{R}^{k}}1/c'_{r}(x_{r})}\nonumber\\
\Leftrightarrow \;&
\sum_{r \in \mathcal{R}^{k}}x^{k}_{r}c_{r}(x_{r})\sum_{r \in \mathcal{R}^{k}}\frac{1}{c'_{r}(x_{r})}\leq \sum_{r \in \mathcal{R}^{k}}x^{k}_{r}\sum_{r \in \mathcal{R}^{k}}\frac{c_{r}(x_{r})}{c'_{r}(x_{r})}\nonumber\nonumber\\
\Leftrightarrow \;&
\sum_{r \in \mathcal{R}^{k}}x^{k}_{r}\left(\hat{c}^{k}-x^{k}_{r}c'_{r}(x_{r})\right)\sum_{r \in \mathcal{R}^{k}}\frac{1}{c'_{r}(x_{r})}\leq \sum_{r \in \mathcal{R}^{k}}x^{k}_{r}\sum_{r \in \mathcal{R}^{k}}\frac{\hat{c}^{k}-x^{k}_{r}c'_{r}(x_{r})}{c'_{r}(x_{r})}\nonumber \\
\Leftrightarrow \;&
\sum_{r \in \mathcal{R}^{k}}\left(x^{k}_{r}\right)^{2}\, c'_{r}(x_{r})\sum_{r \in \mathcal{R}^{k}}\frac{1}{c'_{r}(x_{r})}\geq \Bigl(\sum_{r \in \mathcal{R}^{k}}x^{k}_{r}\Bigr)^{2}.\label{cauchy-schwarz}
\end{align}

Inequality~\eqref{cauchy-schwarz} follows from Cauchy-Schwarz inequality. Furthermore, the equality holds (or, equivalently, $Y^{k}= \tilde{Y}^{l}$) if, and only if, $x^{k}_{r}c'_{r}(x_{r})$ is constant for all $r$ in $\mathcal{R}^{k}$. When this is the case,  $c_{r}(x_{r})=\hat{c}^{k}-x^{k}_{r}c'_{r}(x_{r})$ is also a constant for all $r$ in $\mathcal{R}^{k}$. According to Lemma~\ref{me_flow_1} (iii), this constant must be equal to $c^{0}$.

The relations $Y^{k}(\mathbf{x})\leq \tilde{Y}^{l}(\mathbf{x})\leq Y^{l}(\mathbf{x})$ is now established. Suppose, moreover, that $Y^{k}(\mathbf{x})=Y^{l}(\mathbf{x})$. On the one hand, $\tilde{Y}^{l}(\mathbf{x})=Y^{l}(\mathbf{x})$, implying that $\mathcal{R}^{k}= \mathcal{R}^{l}$. On the other hand, $Y^{k}(\mathbf{x})=\tilde{Y}^{l}(\mathbf{x})$, implying that every arc in $\mathcal{R}^{k}$ costs $c^{0}$.

Suppose that $T^{k} = T^{l}$. The result follows directly from (iii).
\end{proof}
\begin{remark}\rm
 (i) and (iii) of Lemma~\ref{me_flow_2} were also proven by Orda, Rom and Shimkin \cite{Orda93} with another formulation for atomic splittable games. Lemma~1 in \cite{Orda93} claims that, at the NE, if $x^{k}_{r}<x^{l}_{r}$ for some arc $r$, then $x^{k}_{s}\leq x^{l}_{s}$ for all arc $s$, and the inequality is strict if $x^{k}_{s}>0$.
\end{remark}

The following corollary of Lemma~\ref{me_flow_2} shows that the behavior of a coalition at the CE is specified by its weight.
\begin{corollary}\label{me_flow_3}
Two coalitions send the same weight on every arc if, and only if, they have the same weight. In this case, they have the same average cost.
\end{corollary}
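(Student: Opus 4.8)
The plan is to read everything off Lemma~\ref{me_flow_2}, applied to the pair of coalitions $k$ and $l$. For the easy implication, suppose that $x^{k}_{r}=x^{l}_{r}$ for every arc $r\in\mathcal{R}$. Then, using the flow constraints that define $F^{k}$ and $F^{l}$, one gets $T^{k}=\sum_{r\in\mathcal{R}}x^{k}_{r}=\sum_{r\in\mathcal{R}}x^{l}_{r}=T^{l}$, so the two coalitions have the same weight.

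For the converse, assume $T^{k}=T^{l}$. The quickest route is to invoke the last sentence of Lemma~\ref{me_flow_2}: when the two weights are equal, all the inclusions and inequalities in parts (i)--(iv) collapse to equalities; in particular part (iii) yields $x^{k}_{r}=x^{l}_{r}$ for every arc $r$, which is exactly what is claimed. If one prefers a self-contained argument, apply Lemma~\ref{me_flow_2}(i) in both directions (or Lemma~\ref{equation2} twice) to obtain $\mathcal{R}^{k}=\mathcal{R}^{l}=:\mathcal{R}^{kl}$; on this common support the Facts give $\hat{c}^{k}=c_{r}(x_{r})+x^{k}_{r}c'_{r}(x_{r})$ and $\hat{c}^{l}=c_{r}(x_{r})+x^{l}_{r}c'_{r}(x_{r})$, whence $x^{k}_{r}-x^{l}_{r}=(\hat{c}^{k}-\hat{c}^{l})/c'_{r}(x_{r})$ for all $r\in\mathcal{R}^{kl}$; summing over $\mathcal{R}^{kl}$ gives $0=T^{k}-T^{l}=(\hat{c}^{k}-\hat{c}^{l})\sum_{r\in\mathcal{R}^{kl}}1/c'_{r}(x_{r})$, and since $c'_{r}>0$ the sum is strictly positive, forcing $\hat{c}^{k}=\hat{c}^{l}$ and therefore $x^{k}_{r}=x^{l}_{r}$ on $\mathcal{R}^{kl}$, while off $\mathcal{R}^{kl}$ both flows vanish.

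Finally, once $T^{k}=T^{l}$ and $x^{k}_{r}=x^{l}_{r}$ for all $r$, the definition of the average cost gives $Y^{k}(\mathbf{x})=\frac{1}{T^{k}}\sum_{r\in\mathcal{R}}x^{k}_{r}c_{r}(x_{r})=\frac{1}{T^{l}}\sum_{r\in\mathcal{R}}x^{l}_{r}c_{r}(x_{r})=Y^{l}(\mathbf{x})$, which settles the last assertion.

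I do not expect any real obstacle: the corollary is essentially a repackaging of Lemma~\ref{me_flow_2}, and the uniqueness of the CE (Theorem~\ref{unique_mix}) is what makes the weight of a coalition a well-defined invariant of its equilibrium behaviour. The only point that warrants a word of care is that the statement implicitly assumes $T^{k}>0$ (otherwise $Y^{k}$ is not defined); for coalitions of positive weight the argument above is complete.
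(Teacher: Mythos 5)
Your argument is correct and matches the paper's intent exactly: the paper states this as an immediate corollary of Lemma~\ref{me_flow_2} (the equal-weight case of parts (i)--(iii) giving equal flows, the flow constraints giving the trivial converse, and the definition of $Y^{k}$ giving equal average costs), with no separate proof. Nothing is missing.
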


\section{Comparison between CE and WE}\label{sect4}
The previous section was contributed to the basic properties of the CE $\mathbf{x}$ of the game $\Gamma(\mathcal{R},\mathbf{c},\mathbf{T})$. This section will compare it with the WE $\mathbf{w}=(w_{r})_{r\in \mathcal{R}}$ of the corresponding nonatomic game $\Gamma(\mathcal{R},\mathbf{c})$. The equilibrium cost at $\mathbf{w}$ is denoted by $W \in \mathbb{R}$. One says that $\mathbf{x}$ {\em induces} $\mathbf{w}$ if $\mathbf{x}'=\mathbf{w}$, i.e. $x_{r}=w_{r}$ for all $r\in \mathcal{R}$.

Following Hayrapetyan, Tardos and Wexler~\cite{Hay06}, let $\mathcal{R}_{-}=\{r\in \mathcal{R}\,|\, x_{r}<w_{r}\}$, $\mathcal{R}_{+}=\{r\in \mathcal{R}\,|\,  x_{r}>w_{r}\}$ and $\mathcal{R}_{J}=\{r\in \mathcal{R}\,|\,  x_{r}=w_{r}\}$ be, respectively, the set of {\em underloaded arcs}, the set of {\em overloaded arcs} and the set of {\em justly-loaded arcs}.

\begin{lemma}\label{over_underloaded}
If $\mathbf{x}$ does not induce $\mathbf{w}$, then the following are true.
\begin{enumerate}
  \item For all $s \in \mathcal{R}_{-}$ and for all $r \in \mathcal{R}_{+}$, $c_{s}(x_{s})<W<c_{r}(x_{r})$.
  \item $\mathcal{R}^{0}\subset \mathcal{R}_{-}$, i.e. for all $r\in \mathcal{R}$, if $x^{0}_{r}>0$, then $x_{r} < w_{r}$.
  \item  $\mathcal{R}_{+}\subset \mathcal{R}^{1}$, i.e. for all $r\in \mathcal{R}$, if $x_{r}>w_{r}$, then $x^{1}_{r} >0$.
\end{enumerate}
\end{lemma}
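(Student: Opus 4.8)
The plan is to prove the three assertions in the order stated, since (ii) rests on (i) and (iii) on (ii). The only inputs needed about $\mathbf{w}$ are the Wardrop characterization — every arc $r$ with $w_r>0$ has $c_r(w_r)=W$, and every arc has $c_r(w_r)\ge W$ — together with conservation of mass, $\sum_{r}x_r=\sum_{r}w_r=1$, which guarantees that $\mathcal{R}_-$ and $\mathcal{R}_+$ are both nonempty as soon as $\mathbf{x}$ does not induce $\mathbf{w}$. For (i) I would argue directly from strict monotonicity of the $c_r$'s, without using the CE structure: if $s\in\mathcal{R}_-$ then $w_s>x_s\ge0$, so $w_s>0$ and $c_s(w_s)=W$, whence $c_s(x_s)<c_s(w_s)=W$; if $r\in\mathcal{R}_+$ then $x_r>w_r\ge0$, so $c_r(x_r)>c_r(w_r)\ge W$, the last inequality being the Wardrop condition for $r$ (an equality if $w_r>0$, the unused-arc inequality if $w_r=0$).

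For (ii) the pivotal observation is that $c^0<W$: choosing any $s\in\mathcal{R}_-$ and combining (i) with $c^0\le c_s(x_s)$ gives $c^0\le c_s(x_s)<W$. Now let $r\in\mathcal{R}^0$; then $x_r\ge x^0_r>0$, and by the Facts $c_r(x_r)=c^0<W$. Were $x_r\ge w_r$, then either $w_r>0$, forcing $c_r(x_r)\ge c_r(w_r)=W$, or $w_r=0$, putting $r\in\mathcal{R}_+$ so that (i) gives $c_r(x_r)>W$; both contradict $c_r(x_r)<W$. Hence $x_r<w_r$, i.e.\ $r\in\mathcal{R}_-$, which is the claim $\mathcal{R}^0\subset\mathcal{R}_-$.

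For (iii) I first note that the hypothesis forces $K\ge1$ (for $K=0$ the CE is the WE by uniqueness, so $\mathbf{x}$ would induce $\mathbf{w}$). Let $r\in\mathcal{R}_+$; then $x_r>w_r\ge0$, so $x_r>0$ and arc $r$ carries some player's flow. By (ii) it is not individuals' flow ($x^0_r>0$ would put $r\in\mathcal{R}_-$), so $x^k_r>0$ for some coalition $k\in\mathcal{K}$. Since coalition $1$ is a largest coalition, $T^k\le T^1$, and Lemma~\ref{me_flow_2}\,(i) yields $\mathcal{R}^k\subset\mathcal{R}^1$; therefore $r\in\mathcal{R}^1$, i.e.\ $x^1_r>0$.

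Every step here is short and essentially bookkeeping. The only mild subtlety is handling the degenerate case $w_r=0$ inside the argument for (ii), and the discipline of proving (i) first and then reusing it for (ii) and (iii); I do not foresee a genuine obstacle.
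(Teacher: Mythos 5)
Your proposal is correct and follows essentially the same route as the paper: (i) by strict monotonicity and the Wardrop conditions, (ii) by noting the cheapest arcs at $\mathbf{x}$ cost less than $W$ and hence lie in $\mathcal{R}_{-}$, and (iii) by using Lemma~\ref{me_flow_2} to pass from whichever coalition uses an overloaded arc to the largest coalition (the paper excludes individuals' flow via Lemma~\ref{me_flow_1} rather than via part (ii), a negligible difference). Your explicit handling of the $w_r=0$ case and of $K\geq 1$ only makes the same argument slightly more detailed.
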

\begin{proof}
(i) As $\mathbf{x}'\neq \mathbf{w}$, both $\mathcal{R}_{-}$ and $\mathcal{R}_{+}$ are nonempty. Take $s$ in $\mathcal{R}_{-}$ and $r$ in $\mathcal{R}_{+}$, then $w_{s} > x_{s}\geq 0$ and $w_{r} < x_{r}$. In particular, $w_{s}> 0$, which implies that $s$ is used at the WE. Then, $c_{s}(x_{s})<c_{s}(w_{s})= W\leq c_{r}(w_{r})<c_{r}(x_{r})$.

(ii) The individuals take the arcs of the lowest cost at $\mathbf{x}$. According to (i), these arcs must be in $\mathcal{R}_{-}$, hence $\mathcal{R}^{0}\subset \mathcal{R}_{-}$.

(iii) For all $r$ in $\mathcal{R}_{+}$, since $x_{r}>w_{r}\geq 0$, $r$ is used at $\mathbf{x}$. According to Lemma~\ref{me_flow_1} and Lemma~\ref{me_flow_2}, it is used by the largest coalition, coalition 1. Thus, $\mathcal{R}_{+}\subset \mathcal{R}^{1}$.
\end{proof}

The following theorem compares the equilibrium costs at $\mathbf{x}$ with the equilibrium cost at $\mathbf{w}$.

\begin{theorem}\label{main}
If $\mathbf{x}$ does not induce $\mathbf{w}$, then $Y^{0}(\mathbf{x}) < W$ and $Y^{k}(\mathbf{x}) < W$ for each coalition $k$. Consequently, $Y(\mathbf{x})  < W$.
\end{theorem}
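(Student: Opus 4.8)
The plan is to reduce the statement to the single strict inequality $Y^{1}(\mathbf{x})<W$ for the largest coalition, and to obtain that by exhibiting a beneficial unilateral deviation for coalition~$1$: roughly, coalition~$1$ re-routes its entire flow so as never to load an arc beyond its Wardrop level. Observe first that, since $\mathbf{x}$ does not induce $\mathbf{w}$ and $\sum_{r}(x_{r}-w_{r})=0$, both $\mathcal{R}_{-}$ and $\mathcal{R}_{+}$ are nonempty; in particular $K\ge 1$ and $T^{1}>0$, for otherwise $\mathcal{R}^{1}=\emptyset$, contradicting $\mathcal{R}_{+}\subset\mathcal{R}^{1}$ (Lemma~\ref{over_underloaded}(iii)). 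The individuals' part is immediate: $Y^{0}(\mathbf{x})=c^{0}(\mathbf{x})=c_{r}(x_{r})$ for any $r\in\mathcal{R}^{0}$, and $r\in\mathcal{R}^{0}\subset\mathcal{R}_{-}$ by Lemma~\ref{over_underloaded}(ii), so $c_{r}(x_{r})<W$ by Lemma~\ref{over_underloaded}(i). For any coalition~$k$ we have $T^{k}\le T^{1}$, hence $Y^{k}(\mathbf{x})\le Y^{1}(\mathbf{x})$ by Lemma~\ref{me_flow_2}(iv), so $Y^{1}(\mathbf{x})<W$ yields $Y^{k}(\mathbf{x})<W$ for every $k$. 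Finally $Y(\mathbf{x})=\sum_{r}x^{0}_{r}c_{r}(x_{r})+\sum_{k\in\mathcal{K}}u^{k}(\mathbf{x})=T^{0}Y^{0}(\mathbf{x})+\sum_{k\in\mathcal{K}}T^{k}Y^{k}(\mathbf{x})<\bigl(T^{0}+\sum_{k\in\mathcal{K}}T^{k}\bigr)W=W$.

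To bound $Y^{1}(\mathbf{x})$, keep $\mathbf{x}^{-1}$ fixed and let coalition~$1$ move to a flow $\bar{\mathbf{y}}^{1}$ with $\bar y^{1}_{r}\in[\,0,\max(w_{r}-x^{-1}_{r},0)\,]$ for every $r$ and $\sum_{r}\bar y^{1}_{r}=T^{1}$. Such a choice exists, i.e. $\bar{\mathbf{y}}^{1}\in F^{1}$, because
\[\sum_{r\in\mathcal{R}}\max(w_{r}-x^{-1}_{r},0)\ \ge\ \sum_{r\in\mathcal{R}}(w_{r}-x^{-1}_{r})\ =\ 1-(1-T^{1})\ =\ T^{1}.\]
If $\bar y^{1}_{r}>0$ then $w_{r}>x^{-1}_{r}\ge 0$, so $w_{r}>0$, arc~$r$ is used at the WE, $c_{r}(w_{r})=W$, and $\bar y^{1}_{r}+x^{-1}_{r}\le w_{r}$ gives $c_{r}(\bar y^{1}_{r}+x^{-1}_{r})\le W$. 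Hence $u^{1}(\bar{\mathbf{y}}^{1},\mathbf{x}^{-1})=\sum_{r}\bar y^{1}_{r}\,c_{r}(\bar y^{1}_{r}+x^{-1}_{r})\le W\sum_{r}\bar y^{1}_{r}=WT^{1}$. Since $\mathbf{x}^{1}$ minimizes $u^{1}(\cdot,\mathbf{x}^{-1})$ on $F^{1}$ by \eqref{mixed-cond-coal}, we get $u^{1}(\mathbf{x})\le WT^{1}$, that is, $Y^{1}(\mathbf{x})\le W$.

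It remains to rule out equality. The map $u^{1}(\cdot,\mathbf{x}^{-1})$ is separable in the coordinates $y^{1}_{r}$, and the derivative of $y^{1}_{r}\mapsto y^{1}_{r}c_{r}(y^{1}_{r}+x^{-1}_{r})$ equals $\hat c^{1}_{r}$ evaluated at a flow whose arc-$r$ aggregate and coalition-$1$ component both increase with $y^{1}_{r}$; since $\hat c^{1}_{r}$ is strictly increasing in each of these, that derivative is strictly increasing, so each summand, hence $u^{1}(\cdot,\mathbf{x}^{-1})$, is \emph{strictly} convex on $F^{1}$ and has a unique minimizer there. If $Y^{1}(\mathbf{x})=W$, then $\bar{\mathbf{y}}^{1}$ also attains the minimum $WT^{1}$, so $\bar{\mathbf{y}}^{1}=\mathbf{x}^{1}$; then every $r\in\mathcal{R}^{1}$ lies in the support of $\bar{\mathbf{y}}^{1}$, whence $c_{r}(x_{r})=c_{r}(\bar y^{1}_{r}+x^{-1}_{r})\le W$. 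But $\mathcal{R}_{+}\subset\mathcal{R}^{1}$ is nonempty and $c_{r}(x_{r})>W$ on $\mathcal{R}_{+}$ by Lemma~\ref{over_underloaded}(i), a contradiction. Therefore $Y^{1}(\mathbf{x})<W$, and the reduction above concludes.

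I expect the main obstacle to be pinning down the right deviation and justifying that there is enough room, namely the inequality $\sum_{r}\max(w_{r}-x^{-1}_{r},0)\ge T^{1}$ together with the observation that filling arcs only up to their Wardrop levels keeps every used arc at cost at most $W$; once $Y^{1}(\mathbf{x})\le W$ is established, the upgrade to a strict inequality via strict convexity and $\mathcal{R}_{+}\subset\mathcal{R}^{1}$ is routine.
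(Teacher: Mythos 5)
Your proposal is correct, and its core step is genuinely different from the paper's. Both proofs share the same reduction: the individuals' bound via Lemma~\ref{over_underloaded}(i)--(ii), the passage from $Y^{1}(\mathbf{x})<W$ to all coalitions via Lemma~\ref{me_flow_2}(iv), and the social cost as a convex combination. For the key inequality $Y^{1}(\mathbf{x})<W$, however, the paper builds an auxiliary profile $\mathbf{z}\in F$ inducing $\mathbf{w}$ (redistributing the load differences among the coalitions), interpolates with a hybrid profile $\mathbf{y}$ in which only coalition $1$ keeps its CE flow, and proves $u^{1}(\mathbf{x})\leq u^{1}(\mathbf{y})<u^{1}(\mathbf{z})=WT^{1}$ by two integral comparisons against the constants $B$ and $\hat{c}^{1}(\mathbf{x})$. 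You instead keep $\mathbf{x}^{-1}$ fixed and exhibit a single feasible deviation $\bar{\mathbf{y}}^{1}$ for coalition $1$ that loads each arc at most up to its Wardrop level; the counting inequality $\sum_{r}\max(w_{r}-x^{-1}_{r},0)\geq T^{1}$ guarantees feasibility, monotonicity of the $c_{r}$ alone gives $u^{1}(\bar{\mathbf{y}}^{1},\mathbf{x}^{-1})\leq WT^{1}$, and the best-response property \eqref{mixed-cond-coal} yields $Y^{1}(\mathbf{x})\leq W$ without any convexity or marginal-cost computation. Your strictness step is also sound: under Assumption~\ref{cost_assumption_1} each map $t\mapsto t\,c_{r}(t+x^{-1}_{r})$ is strictly convex (its derivative $c_{r}(t+x^{-1}_{r})+t\,c'_{r}(t+x^{-1}_{r})$ is strictly increasing since $c_{r}$ is strictly increasing and $c'_{r}\geq 0$ is nondecreasing), so the best response is unique, and equality would force $\bar{\mathbf{y}}^{1}=\mathbf{x}^{1}$, contradicting $\emptyset\neq\mathcal{R}_{+}\subset\mathcal{R}^{1}$ and $c_{r}(x_{r})>W$ on $\mathcal{R}_{+}$. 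What your route buys is economy and transparency: the weak bound needs only monotonicity, and the two-flow construction with integral estimates disappears; what the paper's route buys is machinery (auxiliary flows plus integral comparisons against equilibrium marginal costs) that it reuses almost verbatim in the monotonicity arguments of Theorem~\ref{efficiency_size}. Your preliminary remark that the hypothesis forces $K\geq 1$ and $T^{1}>0$ is a small but welcome point the paper leaves implicit.
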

\begin{proof}
For the individuals, for all $r \in \mathcal{R}^{0}$, $c_{r}(x_{r})=Y^{0}(\mathbf{x})$. Lemma~\ref{over_underloaded}(ii) implies that $r$ is in $\mathcal{R}_{-}$, and Lemma~\ref{over_underloaded}(i) shows that $c_{r}(x_{r})<W$.

For the coalitions, it is enough to show that $Y^{1}(\mathbf{x}) < W$ for the largest coalition, coalition 1. Once this is proven, the remaining results follow from Lemma~\ref{me_flow_2}.

Let us define an auxiliary flow $\mathbf{z}$ in $F$, such that it induces $\mathbf{w}$ and satisfies the following conditions.
\begin{equation*}
\begin{cases}
\,z^{1}_{r} > x^{1}_{r},\;\; z^{k}_{r} \geq x^{k}_{r},\;\; z^{0}_{r} = x^{0}_{r},  \quad & \;k\in \mathcal{K}\setminus \{1\},\;\; r\in \mathcal{R}_{-}, \\
\,z^{1}_{r} < x^{1}_{r},\; \;z^{k}_{r} \leq x^{k}_{r},\;\; z^{0}_{r} = x^{0}_{r},  \quad & \;k\in \mathcal{K}\setminus \{1\},\;\; r\in \mathcal{R}_{+}, \\
\,z^{k}_{r} = z^{k}_{r}, \quad & \;k=0 \text{ or } k \in \mathcal{K},\;\; r\in \mathcal{R}_{J}.
\end{cases}
\end{equation*}
For example, one can define, for all $k\in \mathcal{K}$ and $r \in \mathcal{R}_{+}$, $z^{k}_{r}=x^{k}_{r}-d^{k}_{r}$, where $d^{k}_{r}=(x_{r}-w_{r})x^{k}_{r}/\sum_{l\in \mathcal{K}}x^{l}_{r}$, while for all $r \in \mathcal{R}_{-}$, $z^{k}_{r}=x^{k}_{r}+d^{k}_{r}$, where $d^{k}_{r}=(w_{r}-x_{r})\sum_{t\in\mathcal{R}_{+}}d^{k}_{t}/\sum_{s\in\mathcal{R}_{-}}(w_{s}-x_{s})$. The above conditions are satisfied due to Lemma~\ref{over_underloaded}(iii).

Let us define another auxiliary flow $\mathbf{y}$ in $F$ as follows. For all $r\in \mathcal{R}$,
\begin{equation*}
y^{k}_{r} =
\begin{cases} \,z^{k}_{r}, \; &\text{ if } k \neq 1,\\
\,x^{1}_{r}, \; &\text{ if } k = 1.
\end{cases}
\end{equation*}
In other words, at $\mathbf{y}$, the individuals and all coalitions, except coalition $1$, behave like at $\mathbf{w}$, while coalition $1$ behaves like at $\mathbf{x}$. Let us show that $u^{1}(\mathbf{x})\leq u^{1}(\mathbf{y}) < u^{1}(\mathbf{z})$.

Some preliminary results are needed.

For all $s \in \mathcal{R}_{-}$ and for all $r \in \mathcal{R}_{+}$,

i) $y_{s}\geq x_{s}$, because $y_{s}- x_{s}=\sum_{\mathcal{K}\setminus \{1\}}z^{k}_{s}-x^{k}_{s}\geq 0$.

ii) $x_{r}\geq y_{r}$, because $x_{r}- y_{r}=\sum_{\mathcal{K}\setminus \{1\}}x^{k}_{r}-z^{k}_{r}\geq 0$.

iii) By Lemma~\ref{over_underloaded}(iii), coalition $1$ takes arc $r$. Thus,
\begin{equation}\label{cost_atom_bis}
c_{r}(x_{r})+x^{1}_{r}c'_{r}(x_{r}) =\hat{c}^{1}(\mathbf{x}) \leq c_{s}(x_{s})+x^{1}_{s}c'_{s}(x_{s}).
\end{equation}

Moreover, due to Assumption~\ref{cost_assumption_1}, for all $0 < x <x_{r}$ and $y > 0$,
\begin{equation}\label{compare_2}
c_{r}(x_{r}-x)+(x^{1}_{r}-x)\,c'_{r}(x_{r}-x) < \hat{c}^{1}(\mathbf{x}) < c_{s}(x_{s}+y)+(x^{1}_{s}+y)\,c'_{s}(x_{s}+y).
\end{equation}

Finally, $c_{s}(x_{s})< c_{r}(x_{r})$ by Lemma~\ref{over_underloaded}(i). Then, it follows from \eqref{cost_atom_bis} that $x^{1}_{r}c'_{r}(x_{r}) < x^{1}_{s}c'_{s}(x_{s})$. Let $B$ be a constant such that $\max_{t \in \mathcal{R}_{+}}\{x^{1}_{t}\,c'_{t}(x_{t})\} \leq B \leq \min_{t \in \mathcal{R}_{-}}\{x^{1}_{t}\,c'_{t}(x_{t})\}$. Then, by Assumption~\ref{cost_assumption_1}, for all $x$ and $y$ such that $0\leq x < x_{r}$ and $y > x_{s}$,
\begin{equation}\label{compare_1}
x^{1}_{r}\,c'_{r}(x) \leq B \leq x^{1}_{s}\,c'_{s}(y).
\end{equation}

Now, let us show that $u^{1}(\mathbf{x})\leq u^{1}(\mathbf{y}) < u^{1}(\mathbf{z})$.
\begin{align*}
u^{1}(\mathbf{y}) - u^{1}(\mathbf{x})\,
& =\, \sum_{r\in \mathcal{R}}y^{1}_{r}c_{r}(y_{r})-\sum_{r\in \mathcal{R}}x^{1}_{r}c_{r}(x_{r}) = \sum_{r\in \mathcal{R}}x^{1}_{r}c_{r}(y_{r})-\sum_{r\in \mathcal{R}}x^{1}_{r}c_{r}(x_{r})\\
& =\, \sum_{s\in \mathcal{R}_{-}}\bigl[\,x^{1}_{s}c_{s}(y_{s})-x^{1}_{s}c_{s}(x_{s})\,\bigr]-\sum_{r\in \mathcal{R}_{+}}\bigl[\,x^{1}_{r}c_{r}(x_{r})-x^{1}_{r}c_{r}(y_{r})\,\bigr]\\
& =\, \sum_{s\in \mathcal{R}_{-}}\int_{x_{s}}^{y_{s}}x^{1}_{s}\,c'_{s}(x)\,\text{d}x - \sum_{r\in \mathcal{R}_{+}}\int_{y_{r}}^{x_{r}}x^{1}_{r}\,c'_{r}(x)\,\text{d}x\nonumber\\
& \geq \, \sum_{s\in \mathcal{R}_{-}}(y_{s}-x_{s})B-\sum_{r\in \mathcal{R}_{+}}(x_{r}-y_{r})B = \sum_{r\in \mathcal{R}}(y_{r}-x_{r})B  = 0.
\end{align*}
The inequality above is due to \eqref{compare_1} and the fact that $y_{s}\geq x_{s}$ for all $s$ in $\mathcal{R}_{-}$ and $x_{r}\geq y_{r}$ for all $r$ in $\mathcal{R}_{+}$.
\begin{align}
u^{1}(\mathbf{z}) - u^{1}(\mathbf{y})\nonumber\,&=\, \sum_{r\in \mathcal{R}}z^{1}_{r}c_{r}(w_{r})-\sum_{r\in \mathcal{R}}y^{1}_{r}c_{r}(y_{r})\,=\, \sum_{r\in \mathcal{R}}z^{1}_{r}c_{r}(w_{r})-\sum_{r\in \mathcal{R}}x^{1}_{r}c_{r}(w_{r}-z^{1}_{r}+x^{1}_{r})\nonumber\\
&=\, \sum_{s\in \mathcal{R}_{-}}\bigl[\,z^{1}_{s}c_{s}(w_{s})-x^{1}_{s}c_{s}(w_{s}-z^{1}_{s}+x^{1}_{s})\,\bigr]-\sum_{r\in \mathcal{R}_{+}}\bigl[\,x^{1}_{r}c_{r}(w_{r}-z^{1}_{r}+x^{1}_{r})-z^{1}_{r}c_{r}(w_{r})\,\bigr]\nonumber\\
\displaybreak[3]
&=\, \sum_{s\in \mathcal{R}_{-}}\int_{x^{1}_{s}}^{z^{1}_{s}} \frac{\partial}{\partial x}\bigl[\,x {c}_{s}(w_{s}-z^{1}_{s}+x)\,\bigr]\,\text{d}x- \sum_{r\in \mathcal{R}_{+}}\int_{z^{1}_{r}}^{x^{1}_{r}}\frac{\partial}{\partial x}\bigl[\,x {c}_{r}(w_{r}-z^{1}_{r}+x)\,\bigr]\,\text{d}x\nonumber\\
&=\, \sum_{s\in \mathcal{R}_{-}}\int_{x^{1}_{s}}^{z^{1}_{s}} \bigl[\,{c}_{s}(w_{s}-z^{1}_{s}+x)+x c'_{s}(w_{s}-z^{1}_{s}+x)\,\bigr]\,\text{d}x\nonumber\\
&\quad - \sum_{r\in \mathcal{R}_{+}}\int_{z^{1}_{r}}^{x^{1}_{r}}\bigl[\,{c}_{r}(w_{r}-z^{1}_{r}+x)+x c'_{r}(w_{r}-z^{1}_{r}+x)\,\bigr]\,\text{d}x\nonumber\\
&\geq \, \sum_{s\in \mathcal{R}_{-}}\int_{x^{1}_{s}}^{z^{1}_{s}} \bigl[\,{c}_{s}(x_{s}-x^{1}_{s}+x)+x c'_{s}(x_{s}-x^{1}_{s}+x)\,\bigr]\,\text{d}x \label{ineq_4}\\
&\quad - \sum_{r\in \mathcal{R}_{+}}\int_{z^{1}_{r}}^{x^{1}_{r}}\bigl[\,{c}_{r}(x_{r}-x^{1}_{r}+x)+x c'_{r}(x_{r}-x^{1}_{r}+x)\,\bigr]\,\text{d}x\nonumber\\
&>\, \sum_{s\in \mathcal{R}_{-}}(z^{1}_{s}-x^{1}_{s})\,\hat{c}^{1}(\mathbf{x})-\sum_{r\in \mathcal{R}_{+}}(x^{1}_{r}-z^{1}_{r})\,\hat{c}^{1}(\mathbf{x})\label{ineq_6}\\
&=\,\sum_{r\in \mathcal{R}}(z^{1}_{r}-x^{1}_{r})\,\hat{c}^{1}(\mathbf{x})=(T^{1}-T^{1})\,\hat{c}^{1}(\mathbf{x})= 0.\nonumber
\end{align}
Inequality~\eqref{ineq_4} is due to the following facts which follow immediately from the definition of $\mathbf{z}$. For $s$ in $\mathcal{R}_{-}$, $z^{1}_{s}>x^{1}_{s}$ and $w_{s}-z^{1}_{s}\geq x_{s}-x^{1}_{s}$, while for $r$ in $\mathcal{R}_{+}$, $x^{1}_{r}>z^{1}_{r}$ and $x_{r}-x^{1}_{r}\geq w_{r}-z^{1}_{r}$. Inequality~\eqref{ineq_6} is due to \eqref{compare_2}.

Thus, one has proved that $u^{1}(\mathbf{x})< u^{1}(\mathbf{z})$ or, equivalently, $Y^{1}(\mathbf{x}) < Y^{1}(\mathbf{z})$. Besides, since $\mathbf{z}$ induces $\mathbf{w}$, every arc used at $\mathbf{z}$ costs $W$. In consequence, $Y^{1}(\mathbf{z})=W$, which completes the proof.
\end{proof}
\begin{remark}\rm
Cominetti, Correa and Stier-Moses~\cite[Corollary 4.1]{Com09} proved that, if the cost functions are non-decreasing, convex and differentiable, and the atomic splittable players are identical, then the social cost at any NE in an atomic splittable game is lower than that at the corresponding WE. Hayrapetyan, Tardos and Wexler~\cite[Theorem 2.3]{Hay06} proved that, in a two-terminal parallel-arc network, if the cost functions are non-decreasing, convex and differentiable, then the social cost at any NE in an atomic splittable game is lower than that at the corresponding WE. In this work, stronger convexity conditions on the cost functions allow to prove that not only the social average cost, but also the average cost to any coalition and the individuals' cost are lower at the CE than at the WE.
\end{remark}
\begin{remark}\rm
Cominetti, Correa and Stier-Moses~\cite[\S 2.1]{Com09} provided an example where two groups of individuals have {\em different} origin/destination pairs. They showed that, when one of the two groups forms a coalition, both the social cost and the average cost to this coalition are increased. This implies that further studies are needed for more general cases where the network is not two-terminal parallel-arc type.
\end{remark}

\section{Impact of the composition of the population on the CE costs}\label{sect5}
This section focuses on the relation between the costs at the CE and the composition of the set of the players, i.e. its partition into coalitions and individuals. In the first part, one considers a unique coalition of weight $T\in [\,0,1]$, and studies the variation of the coalition's cost and the remaining individuals' cost with respect to $T$. In the second part, for a general composition of the set of the players, one shows that, whenever a coalition decreases, i.e. some of its members become individuals, the individuals' cost is increased.

\subsection{CE costs as functions of the size of the unique coalition}
Suppose that a unique coalition of weight $T\in [0,1]$ is formed.
\begin{figure}[!htbp]
\begin{center}
\begin{tikzpicture}
\draw[thin,->] (0,0)node[below left,scale=0.7]{0} -- (4.7,0)node[below,scale=0.7] {$T^{1}$};
\draw[thin,->] (0,0) -- (0,4.7)node[left,scale=0.7] {$T^{0}$};

\fill[gray,opacity=0.1] (4,0) -- (4,4) -- (0,4) -- cycle;
\draw (4,4) -- (4,0) node[below,scale=0.7] {$1$};
\draw [dotted] (0,4) -- (4,0);
\draw [dashed, thick] (4,4) -- (0,4) node[left,scale=0.7] {$1$} ;

\draw [thick] (2.5,1.5) -- (0,1.5) node[left,scale=0.7] {$1-T$};
\draw [thick] (0,0) -- (4,0) ;
\draw [dotted] (2.5,1.5) -- (2.5,0) node[below,scale=0.7] {$T$};
\draw [dashed, thick] (2.5,1.5) -- (4,1.5);

\node[scale=0.7] (WE)at(1.8,4.5) {$WE$};
\node[scale=0.7] (SO)at(1.8,-0.5) {$SO$};
\node[scale=0.7] (CE)at(1.5,1.1) {$CE(1-T;T)$};
\draw [->,thin] (WE) to (2,4.05);
\draw [->,thin] (SO) to (2,-0.05);
\draw [->,thin] (CE) to (2,1.45);
\end{tikzpicture}
\caption{\label{coalition_size} Composition of the players}
\end{center}
\end{figure}
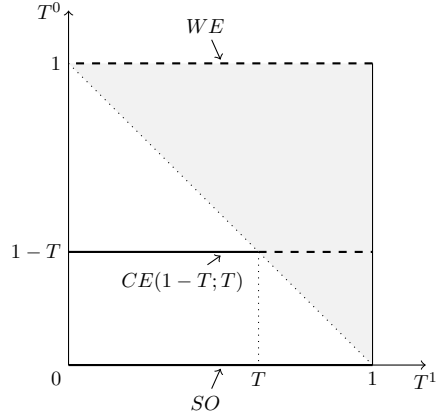
Every horizontal line in Figure~\ref{coalition_size} represents a composition of the set of the players: the unique coalition is presented by the plain part on the left, and the individuals by the dashed part on the right. From bottom to top, the unique coalition decreases. The top (dashed) line stands for the WE $\mathbf{w}$, the bottom (plain) line stands for the social optimum, and any horizontal line between them stands for the CE of a one-coalition composite game.
\begin{lemma}\label{unique_coalition_T}
There exists a number $\tilde{T}$ in $[\,0,\,1]$ such that the CE in $\Gamma(\mathcal{R},\mathbf{c},(1-T;\,T))$ induces $\mathbf{w}$ if, and only if, $T \leq \tilde{T}$.
\end{lemma}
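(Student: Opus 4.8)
The plan is to exhibit a threshold value $\tilde{T}$ above which the coalition's presence distorts the aggregate flow, and below which it does not. The natural candidate is
\[
\tilde{T} = \sup\{\,T\in[\,0,\,1\,] \;:\; \text{the CE in } \Gamma(\mathcal{R},\mathbf{c},(1-T;\,T)) \text{ induces } \mathbf{w}\,\},
\]
with the convention $\tilde{T}=0$ if the set is empty (note $T=0$ always belongs to it, since then the composite game is exactly the nonatomic game). The whole content of the lemma is then the claim that this set is an interval of the form $[\,0,\,\tilde T\,]$, i.e. that ``inducing $\mathbf{w}$'' is a downward-closed property in $T$, together with the fact that the supremum is attained.

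First I would set up the explicit characterization of when the CE induces $\mathbf{w}$. Write $\mathcal{R}^{+}=\{r\in\mathcal{R} : w_r>0\}$ for the support of the WE. If a CE $\mathbf{x}$ of $\Gamma(\mathcal{R},\mathbf{c},(1-T;\,T))$ induces $\mathbf{w}$, then on each used arc the aggregate cost is $W$, so by the marginal-cost condition \eqref{cost_atom} the coalition's flow $\mathbf{x}^{1}$ must satisfy: $\hat c^{1}_r(\mathbf{x}) = c_r(w_r) + x^1_r c'_r(w_r)$ is constant (equal to some value $\hat c^{1}$) on $\mathcal{R}^{1}(\mathbf x)$ and $\geq\hat c^1$ elsewhere, with $\sum_r x^1_r = T$ and $0\le x^1_r\le w_r$. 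Conversely, given such an $\mathbf{x}^1$, setting $x^0_r = w_r - x^1_r$ yields a feasible flow inducing $\mathbf{w}$ that satisfies \eqref{cost_nonatom}--\eqref{cost_atom}, hence is \emph{the} CE by Theorem~\ref{unique_mix}. So ``the CE induces $\mathbf{w}$'' is equivalent to: there exists $(x^1_r)_{r}$ with $0\le x^1_r\le w_r$, $\sum_r x^1_r = T$, and $c_r(w_r)+x^1_r c'_r(w_r)$ minimized-and-equalized on its support in the sense above. One can make this fully explicit via a water-filling / parametric construction: for a parameter $\lambda\ge\min_r c_r(w_r)$, define $x^1_r(\lambda) = \min\bigl(w_r,\ \max(0,\ (\lambda - c_r(w_r))/c'_r(w_r))\bigr)$; each $x^1_r(\lambda)$ is continuous and nondecreasing in $\lambda$, so $g(\lambda):=\sum_r x^1_r(\lambda)$ is continuous and nondecreasing, with $g(\lambda)=0$ for $\lambda$ small and $g(\lambda)=\sum_r w_r = 1$ for $\lambda$ large. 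Then the CE induces $\mathbf{w}$ for the coalition weight $T$ if and only if $T$ is in the range of $g$, i.e. $T\le \tilde T := g(\lambda^{*})$ where $\lambda^{*}$ is the largest $\lambda$ with $x^1_r(\lambda)\le w_r$ for all $r$, equivalently $\lambda^* = \min_{r\in\mathcal R^+}\bigl(c_r(w_r)+w_r c'_r(w_r)\bigr)$ (the smallest aggregate marginal cost over WE-used arcs when the coalition were to take the whole WE flow). Since $g$ is continuous and its range is the interval $[\,0,\,\tilde T\,]$, the set of admissible $T$ is exactly $[\,0,\,\tilde T\,]$, and $\tilde T\in[\,0,\,1\,]$.

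The one step requiring care — and the main obstacle — is verifying that membership of $T$ in the range of $g$ is genuinely \emph{equivalent} to the CE inducing $\mathbf{w}$, in both directions. The ``if'' direction needs the converse part of Proposition~\ref{eq_characterize}: one must check that the flow built from $x^1_r(\lambda)$ and $x^0_r = w_r - x^1_r(\lambda)$ really does satisfy \eqref{cost_nonatom} (individuals on arcs with $x^0_r>0$, i.e. $x^1_r<w_r$, see cost exactly $c_r(w_r)=W$, which holds since $w_r>0$ there) and \eqref{cost_atom} for the coalition (built into the water-filling), and then invoke uniqueness to conclude it \emph{is} the CE. The ``only if'' direction needs that any CE inducing $\mathbf{w}$ forces $x^1_r\le w_r$ (immediate, since $x^0_r\ge0$ and $x^0_r+x^1_r=w_r$) and the equalized-marginal-cost structure, so $x^1 = x^1(\lambda)$ for the appropriate $\lambda$ and hence $T = g(\lambda)\le\tilde T$. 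A minor subtlety: on arcs $r\notin\mathcal{R}^+$ (where $w_r=0$) the constraint $x^1_r\le w_r=0$ forces $x^1_r=0$, so these arcs play no role — this is why $\lambda^*$ is a minimum over $\mathcal R^+$ only. Once these equivalences are nailed down, the monotone-range argument gives the lemma immediately, and attainment of the supremum is automatic because $g$ is continuous on a compact parameter interval (or simply because $\tilde T = g(\lambda^*)$ is itself in the range).
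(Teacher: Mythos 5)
There is a genuine gap, and it sits exactly at the point you dismiss as a ``minor subtlety'': the arcs unused at the WE do \emph{not} ``play no role''. It is true that any CE inducing $\mathbf{w}$ must put zero coalition flow on an arc $s$ with $w_s=0$, but the equilibrium condition \eqref{cost_atom} still compares the coalition's marginal cost on the arcs it uses against $\hat c^{1}_s(\mathbf{x})=c_s(0)$ on those unused arcs. Hence, for your water-filled flow to be a CE, you need not only $\lambda\le\lambda^{*}=\min_{r\in\mathcal R^{+}}\bigl(c_r(w_r)+w_rc'_r(w_r)\bigr)$ but also $\lambda\le\min_{s:\,w_s=0}c_s(0)$; otherwise the coalition strictly prefers to shift mass onto an unused arc and, by uniqueness (Theorem~\ref{unique_mix}), the true CE does not induce $\mathbf{w}$. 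Concretely, take two arcs with $c_1(x)=x$ and $c_2(x)=1.5+x$: the WE puts all weight on arc $1$, $W=1$, and your formula gives $\tilde T=g(\lambda^{*})=1$; but for $T=0.8$ the candidate flow has coalition marginal cost $c_1(1)+0.8=1.8>1.5=c_2(0)$, so \eqref{cost_atom} fails and the CE does not induce $\mathbf{w}$. Your claimed equivalence ``$T$ in the range of $g$ $\iff$ CE induces $\mathbf{w}$'' is therefore false in the ``if'' direction, and your threshold is too large. The paper's threshold is
\begin{equation*}
\tilde T=\min\Bigl\{\min_{r:\,w_r>0}w_rc'_r(w_r),\ \min_{r:\,w_r=0}\bigl(c_r(0)-W\bigr)\Bigr\}\cdot\sum_{r:\,w_r>0}\frac{1}{c'_r(w_r)},
\end{equation*}
with the degenerate case $c_r(0)=W$ for some unused arc forcing $\tilde T=0$, which the paper treats separately.

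Apart from this, your strategy is essentially the paper's: the paper also writes down the explicit equalized-marginal-cost flow $x^{1}_r=T/(Ac'_r(w_r))$ on the WE-used arcs, verifies \eqref{cost_nonatom}--\eqref{cost_atom}, invokes uniqueness for the ``if'' direction, and derives $x^{1}_r=(\hat c^{1}-W)/c'_r(w_r)$ plus the constraints $x^{1}_r\le w_r$ \emph{and} $\hat c^{1}\le c_r(0)$ on unused arcs for the ``only if'' direction. Your argument would be repaired by replacing $\lambda^{*}$ with $\min\bigl(\lambda^{*},\,\min_{r:\,w_r=0}c_r(0)\bigr)$ and carrying the unused-arc comparison through both directions of the equivalence; as written, however, the proof is incorrect.
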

\begin{proof}
Let $\mathcal{R}_{a}=\{r\in \mathcal{R}\,|\, w_{r}>0\}$ be the set of used arcs at $\mathbf{w}$, and $\mathcal{R}_{i}=\mathcal{R}\backslash \mathcal{R}_{a}=\{r\in \mathcal{R}\,|\, w_{r}=0\}$ the set of unused arcs, which may be empty. Set $A=\sum_{r\in \mathcal{R}_{a}}\frac{1}{c'_{r}(w_{r})}$. Then, the following constant
\begin{equation}\label{def:T}
\tilde{T}=\min \Bigl\lbrace \min_{r \in \mathcal{R}_{a}} w_{r}c'_{r}(w_{r})A,\;\min_{r \in \mathcal{R}_{i}}\left(c_{r}(0)-W\right)A\Bigr\rbrace,
\end{equation}
is the threshold. This can be proven in two cases.

{\sc Case 1.} For all $r \in \mathcal{R}_{i}$, $c_{r}(0)>W$.

In this case, $\tilde{T}>0$. Let us show that the CE induces $\mathbf{w}$ if, and only if, $T\leq \tilde{T}$.

On the one hand, if $T\leq \tilde{T}$, the following flow $\mathbf{x}$ is the CE.
\begin{equation*}
\begin{cases}
\, x^{1}_{r} = \frac{T}{Ac'_{r}(w_{r})},\quad x^{0}_{r}= w_{r}-x^{1}_{r},\hspace{1cm} & r \in \mathcal{R}_{a};\\
\, x^{1}_{r} = x^{0}_{r} = 0, & r \in \mathcal{R}_{i}.
\end{cases}
\end{equation*}
Indeed, $\mathbf{x}$ is well-defined because of the definition of $\tilde{T}$, and $\sum_{r\in \mathcal{R}}x^{1}_{r}=T$. Next, as $x_{r}=w_{r}$ for all $r$, the individuals do take the arcs of the lowest cost. Finally, it follows from the definition of $\tilde{T}$ that, for all $r \in \mathcal{R}_{i}$, $(c_{r}(0)-W)A\geq \tilde{T} \geq T$. It is not difficult to see that, for all $r\in \mathcal{R}_{a}$, $c_{r}(x_{r})+x^{1}_{r}c'_{r}(w_{r})=W+\frac{T}{A}$ and, for all $r\in \mathcal{R}_{i}$, $c_{r}(0)\geq W+\frac{T}{A}$. The equilibrium condition \eqref{mixed-cond-coal} is thus satisfied for the coalition. One deduces that $\mathbf{x}$ is the CE. Besides, it induces a WE, because $x_{r}=w_{r}$ for all $r$.

On the other hand, if the CE $\mathbf{x}$ induces $\mathbf{w}$, i.e. $x_{r}=w_{r}$ for all $r$, then for all $r \in \mathcal{R}_{i}$, $x_{r}=w_{r}=0$, which implies that there exists an arc $s\in \mathcal{R}_{a}$ such that $x^{1}_{s}>0$. However, for all $r\in \mathcal{R}_{a}$ such that $x^{1}_{r}=0$, one has $W=c_{r}(w_{r})=c_{r}(x_{r})\geq \hat{c}^{1}(\mathbf{x})= c_{s}(x_{s})+x^{1}_{s}c'_{s}(x_{s})>c_{s}(x_{s})=c_{s}(w_{s})=W$, a contradiction. Therefore, for all $r \in \mathcal{R}_{a}$, $x^{1}_{r}>0$ and $ c_{r}(x_{r})+x^{1}_{r}c'_{r}(x_{r})=\hat{c}^{1}(\mathbf{x})$. In consequence, $x^{1}_{r}=\frac{\hat{c}^{1}(\mathbf{x})-c_{r}(x_{r})}{c'_{r}(x_{r})}=\frac{\hat{c}^{1}(\mathbf{x})-W}{c'_{r}(w_{r})}$. The constraint $x^{1}_{r}\leq w_{r}$ implies that $\hat{c}^{1}-W<w_{r}c'_{r}(w_{r})$. Consequently, $T=\sum_{r\in \mathcal{R}_{a}}x^{1}_{r}=\sum_{r\in \mathcal{R}_{a}}\frac{\hat{c}^{1}-W}{c'_{r}(w_{r})}=(\hat{c}^{1}-W)A<w_{r}c'_{r}(w_{r})$, for all $r \in \mathcal{R}_{a}$.

Besides, for all $r \in \mathcal{R}_{i},\,\hat{c}^{1}\leq c_{r}(0)$, which implies that $T=(\hat{c}^{1}-W)A\leq (c_{r}(0)-W)A$. Thus, $T\leq \tilde{T}$ is proven.

{\sc Case 2.} There exists some $t \in \mathcal{R}_{i}$ such that $c_{t}(0)=W$.

In this case, $\tilde{T}=0$. Let us show that the CE does not induce a WE as long as $T>0$. Otherwise, suppose that for some $T>0$, the CE $\mathbf{x}$ induces a WE. By the same reasoning as in Case 1, there exists an arc $s\in \mathcal{R}_{a}$ such that $x^{1}_{s}>0$. Then, $c_{t}(0)=W=c_{s}(w_{s})< c_{s}(w_{s})+x^{1}_{s}c'_{s}(w_{s})=\hat{c}^{1}$. However, $x_{t}=w_{t}=0$, which implies that $c_{t}(0)\geq \hat{c}^{1}$, a contradiction.
\end{proof}
\begin{example} \label{ex:1}\rm
There are two parallel arcs $r_{1}$ and $r_{2}$, whose cost functions are, respectively, $c_{1}(x)=x+10$ and $c_{2}(x)=10x+1$. A computation shows that the threshold is $\tilde{T}=\frac{1}{10}$. If less than one tenth of the players join the coalition, the coalition changes actually nothing in the game equilibrium.
\end{example}
\begin{theorem}\label{efficiency_size}
Let $\tilde{T}$ be defined by \eqref{def:T}. The individuals' cost $Y^{0}(T)$, the average cost to the unique coalition $Y^{1}(T)$, and the social cost $Y(T)$ in $\Gamma(\mathcal{R},\mathbf{c},(1-T;\,T))$ have the following properties.
\begin{enumerate}
  \item For $T \in [\,0,\, \tilde{T}\,]$, $Y^{0}(T)=Y^{1}(T)=Y(T)=W$.
  \item For $T \in (\,\tilde{T},\, 1\,]$, $Y^{0}(T)<Y^{1}(T)<W$, $Y(T)<W$. In particular, $Y^{1}(1)<Y^{0}(0)=W$.
  \item $Y^{0}(T)$, $Y^{1}(T)$ and $Y(T)$ are all strictly decreasing with respect to $T$ on $[\,\tilde{T},\, 1\,]$.
\end{enumerate}
\end{theorem}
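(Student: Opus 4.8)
The plan is to treat the three claims in sequence, noting that (i) is immediate from Lemma~\ref{unique_coalition_T} (for $T\le\tilde T$ the CE induces $\mathbf{w}$, so every used arc costs $W$, forcing $Y^0=Y^1=Y=W$), and that the strict inequalities $Y^0(T)<Y^1(T)<W$ and $Y(T)<W$ in (ii) follow directly from Theorem~\ref{main} together with Lemma~\ref{me_flow_1}(iii), once we know that for $T>\tilde T$ the CE does not induce $\mathbf{w}$ — which is exactly the ``only if'' direction of Lemma~\ref{unique_coalition_T}. The substantive content is therefore (iii), the strict monotonicity. For this I would first record an explicit description of the CE $\mathbf{x}(T)$ for $T\in(\tilde T,1]$: by the argument already used in the proof of Lemma~\ref{unique_coalition_T}, Case~1 (the ``only if'' part) extends to show that there is a set of arcs $\mathcal{S}(T)\subset\mathcal{R}$ on which $x^1_r(T)>0$, characterized by $c_r(x_r)+x^1_r c'_r(x_r)=\hat c^1$, with $x^1_r=0$ off $\mathcal{S}(T)$; and by Lemma~\ref{equation2}, $\mathcal{R}^0(T)\subset\mathcal{S}(T)$, with $c_r(x_r)=c^0=Y^0$ on $\mathcal{R}^0$. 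The key monotonicity quantity is $\hat c^1(T)$: I would show $\hat c^1(T)$ is strictly increasing in $T$, and deduce the rest.

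The main technical step is a comparison/continuity argument for two values $T<T'$ in $(\tilde T,1]$ with CEs $\mathbf{x}=\mathbf{x}(T)$, $\mathbf{x}'=\mathbf{x}(T')$. I would argue that the aggregate flow shifts ``toward overloading'': using the variational-inequality characterization~\eqref{VIP_mix} of each CE, testing the VI for $\mathbf{x}$ against $\mathbf{x}'$ and vice versa and adding (the standard monotonicity trick for these routing VIs, exactly as in Haurie--Marcotte and Orda--Rom--Shimkin), one obtains an inequality of the form
\[
\sum_{r\in\mathcal{R}}\bigl(c_r(x_r)-c_r(x'_r)\bigr)(x_r-x'_r)+\sum_{r}\bigl(x^1_r c'_r(x_r)-x'^1_r c'_r(x'_r)\bigr)(x^1_r-x'^1_r)\le 0,
\]
and since every summand in the first sum is $\ge 0$ (each $c_r$ strictly increasing) this pins down the sign structure. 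Combined with the fact that the coalition's total weight grows, one concludes $\hat c^1(T')>\hat c^1(T)$ and, on each arc $r\in\mathcal{R}^0(T')$, that $c_r(x'_r)\le c_r(x_r)$ with strict inequality somewhere, giving $Y^0(T')<Y^0(T)$. For $Y^1$, I would use $Y^1(T)=\frac{1}{T}\sum_r x^1_r c_r(x_r)=\hat c^1(T)-\frac1T\sum_r (x^1_r)^2 c'_r(x_r)$ and combine the growth of $\hat c^1$ with a direct comparison of the correction term; alternatively, and more cleanly, I would mimic the auxiliary-flow device from the proof of Theorem~\ref{main}: embed the coalition of weight $T$ as a ``sub-coalition'' inside the coalition of weight $T'$ and run the same Cauchy--Schwarz estimate as in Lemma~\ref{me_flow_2}(iv) to get $Y^1(T)\le Y^1(T')$ is false — rather, the split favors the smaller coalition, so in fact $Y^1$ is increasing in the coalition size only among \emph{distinct} coalitions in one game; here the right statement is the opposite, and the correct route is the envelope/derivative computation below.

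For the cleanest treatment of (iii) I would differentiate. On any maximal interval where the support $\mathcal{R}^0(T)$ and $\mathcal{S}(T)$ are locally constant, the implicit system $c_r(x_r)+x^1_r c'_r(x_r)=\hat c^1$, $x_r-x^1_r$ constant on $\mathcal{R}^0$ with $\sum_r x^1_r=T$, defines $\mathbf{x}(T)$ as a $C^1$ function of $T$ (invertibility of the relevant Jacobian follows from strict convexity of $c_r$, exactly the structure appearing in~\eqref{def:T}); one computes $\frac{d\hat c^1}{dT}>0$, then $\frac{dY^0}{dT}=\frac{d c^0}{dT}<0$ since the common arc cost $c^0$ decreases as load moves off the individuals' arcs, and finally an envelope argument gives $\frac{dY^1}{dT}$ and $\frac{dY}{dT}$: because $\mathbf{x}^1(T)$ minimizes $u^1(\cdot,\mathbf{x}^{-1})$, $\frac{d}{dT}u^1(\mathbf{x}(T))=\hat c^1(T)\cdot 1+(\text{terms from }\mathbf{x}^0\text{ change})$, and one checks the total derivative of the social cost $Y(T)=\sum_r x_r c_r(x_r)$ is negative by grouping overloaded vs.\ underloaded arcs as in Lemma~\ref{over_underloaded}. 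The finitely many breakpoints where a support changes are handled by continuity of all three functions in $T$ (the CE depends continuously on $\mathbf{T}$, which can be cited from the uniqueness in Theorem~\ref{unique_mix} plus a standard closed-graph argument, or proved directly as in Section~\ref{sect6}). The main obstacle is making the piecewise-$C^1$ structure rigorous — showing the support is locally constant off a finite set and that the defining system is non-degenerate there; the sign computations themselves are short once that framework is in place.
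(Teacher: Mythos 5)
Your treatment of (i) and (ii) is fine and coincides with the paper's (Lemma~\ref{unique_coalition_T}, Theorem~\ref{main}, Lemma~\ref{me_flow_1}(iii), plus the observation that $Y^{0}=Y^{1}$ would force the CE to induce $\mathbf{w}$). The problem is (iii), which is the substantive claim, and there your argument has genuine gaps. First, the ``standard monotonicity trick'' of testing the variational inequality for $\mathbf{x}(T)$ against $\mathbf{x}(T')$ and vice versa is not available here: the two equilibria live in \emph{different} feasible sets ($\sum_{r}x^{1}_{r}=T$ versus $T'$, and likewise for the individuals), so neither point is an admissible test vector in the other's VI, and the displayed inequality is unjustified. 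Any cross-comparison must first pass through an auxiliary feasible flow; this is exactly what the paper does, building a flow $\mathbf{z}\in F(T)$ whose aggregate equals that of the CE for $S<T$, proving $u^{1}_{T}(\mathbf{z})>u^{1}_{T}(\mathbf{x})$ by integral estimates against $\hat{c}^{1}(\mathbf{x})$, and then using the identity $u^{1}_{T}(\mathbf{z})=u^{1}_{S}(\mathbf{y})+(T-S)\,c^{0}(\mathbf{y})$ together with $c^{0}(\mathbf{y})<Y^{1}(S)$ to conclude $Y^{1}(T)<Y^{1}(S)$.

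Second, for the monotonicity of $Y^{1}$ you explicitly concede that the Cauchy--Schwarz route points the wrong way and defer everything to an ``envelope/derivative computation,'' but that computation is never carried out: the piecewise-$C^{1}$ structure (supports locally constant off a finite set, non-degeneracy of the implicit system, finitely many breakpoints) is precisely what you yourself flag as the main obstacle and leave unproved, and even granting smoothness you do not verify the needed signs --- e.g.\ $\frac{d}{dT}\bigl(u^{1}(T)/T\bigr)<0$ requires $\frac{d}{dT}u^{1}<Y^{1}(T)$, and the envelope term involves the response of the individuals' flow $\mathbf{x}^{0}(T)$, whose direction is exactly the point at issue. Similarly, the assertion $\frac{dY^{0}}{dT}<0$ ``since the common arc cost decreases as load moves off the individuals' arcs'' presupposes the direction of the flow shift rather than proving it; the paper establishes this by the combinatorial steps showing that some overloaded arc carries individuals at the smaller-coalition equilibrium, whence $c^{0}(\mathbf{y})>c^{0}(\mathbf{x})$, and then $\hat{c}^{1}(\mathbf{y})<\hat{c}^{1}(\mathbf{x})$, etc. The same gap affects your claim for the social cost $Y(T)$, for which the paper again uses an explicit integral comparison over overloaded versus underloaded arcs. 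As written, part (iii) is a plan, not a proof; to complete it along the paper's lines you need the auxiliary-flow comparison, and to complete it along your differential lines you need to prove the piecewise-smooth structure and actually perform the sign computations.
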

\begin{proof} First, notice that $Y^{1}(T)$ is not defined for $T=0$, and that $Y^{0}(T)$ is not defined for $T=1$. However, as the cost functions satisfy Assumptions~\ref{cost_assumption_1}, one can extend $Y^{1}(T)$ to $T=0$ and $Y^{0}(T)$ to $T=1$ without difficulty.

(i) See Lemma~\ref{unique_coalition_T}.

(ii) According to Lemma~\ref{unique_coalition_T}, when $\tilde{T}< T \leq 1$, the CE does not induce $\mathbf{w}$. Therefore, according to Lemma~\ref{me_flow_1}(iii) and Theorem~\ref{main}, $Y^{0}(T)\leq Y^{1}(T)<W$. It remains to show that $Y^{0}(T)< Y^{1}(T)$.

Suppose that $Y^{0}(T) = Y^{1}(T)$. Then, $Y^{0}(T)= \underline{Y}^{1}(T)= Y^{1}(T)$ by Lemma~\ref{me_flow_1}, where $\underline{Y}^{1}(T)$ is the lowest cost of the arcs used by the coalition. This means that every arc used by the coalition has the lowest cost $Y^{0}(T)$. Therefore, the CE does induce $\mathbf{w}$, a contradiction.

(iii) Suppose that $\tilde{T} < S < T \leq 1$. Let $\mathbf{x}=(\mathbf{x}^{0},\,\mathbf{x}^{1})$ and $\mathbf{y}=(\mathbf{y}^{0},\,\mathbf{y}^{1})$ be, respectively, the CE of the game $\Gamma(\mathcal{R},\mathbf{c},(1-T;\,T))$ and that of the game $\Gamma(\mathcal{R},\mathbf{c},(1-S;\,S))$. The other notations are as listed at the beginning of Section~\ref{sect3}.

Let $\mathcal{R}_{-}$ be the set of underloaded or justly-loaded arcs, and $\mathcal{R}_{+}=\mathcal{R}\setminus \mathcal{R}_{-}$ be the set of overloaded arcs. In other words,
\begin{equation*}
 \mathcal{R}_{-}=\{r\in \mathcal{R}\,|\,y_{r}\leq x_{r}\},\quad  \mathcal{R}_{+}=\{r\in \mathcal{R}\,|\,y_{r}>x_{r}\}.
\end{equation*}

If $\mathcal{R}_{+}=\emptyset$, then $\mathbf{x}'=\mathbf{y}'$, i.e. $x_{r}=y_{r}$ for all $r\in \mathcal{R}$. Let us prove that this is impossible.

Define $\mathcal{R}^{\flat}=\{r\in \mathcal{R}\,|\,c_{r}(x_{r})=c^{0}(\mathbf{x})\}$ and $\mathcal{R}^{\sharp}=\{r\in \mathcal{R}\,|\,x_{r}>0,\,c_{r}(x_{r})>c^{0}(\mathbf{x})\}$. Then, for all $r\in \mathcal{R}^{\sharp}, x^{0}_{r}=0$. As $T>S>\tilde{T}$, according to (ii), $\mathbf{x}$ and $\mathbf{y}$ do not induce $\mathbf{w}$. Therefore, $\mathcal{R}^{\sharp}$ is nonempty.

For all $r\in \mathcal{R}^{\flat}$, $c_{r}(y_{r})=c_{r}(x_{r})=c^{0}(\mathbf{x})$ while, for all $r\in \mathcal{R}^{\sharp}$, $c_{r}(y_{r})=c_{r}(x_{r})>c^{0}(\mathbf{x})$. Hence, $c^{0}(\mathbf{x})$ is the minimal arc cost at $\mathbf{y}$. One deduces that $c^{0}(\mathbf{y})=c^{0}(\mathbf{x})$ and, for all $r\in \mathcal{R}^{\sharp}$, $y^{0}_{r}=0, y^{1}_{r}=y_{r}=x_{r}$.

On the one hand, $\hat{c}^{1}(\mathbf{x})$ and $\hat{c}^{1}(\mathbf{y})$ are both equal to $c_{r}(x_{r})+x_{r}c'_{r}(x_{r})$ for all $r\in \mathcal{R}^{\sharp}$. On the other hand, for all $r\in \mathcal{R}^{\flat}$, as $\hat{c}^{1}(\mathbf{x})=c_{r}(x_{r})+x^{1}_{r}c'_{r}(x_{r})$ and $\hat{c}^{1}(\mathbf{y})=c_{r}(x_{r})+y^{1}_{r}c'_{r}(x_{r})$, it follows from $\hat{c}^{1}(\mathbf{x})=\hat{c}^{1}(\mathbf{y})$ that $x^{1}_{r}=y^{1}_{r}$. Therefore, $T=\sum_{r\in \mathcal{R}^{\flat}}x^{1}_{r}+\sum_{r\in \mathcal{R}^{\sharp}}x_{r}=\sum_{r\in \mathcal{R}^{\flat}}y^{1}_{r}+\sum_{r\in \mathcal{R}^{\sharp}}y_{r}=S$, a contradiction. Hence, $\mathcal{R}_{+} \neq \emptyset$, and there exists some $r\in \mathcal{R}_{-}$ such that $y_{r}<x_{r}$.

Now, we will show that $Y^{0}(T)<Y^{0}(S)$, $Y^{1}(T)<Y^{1}(S)$ and $Y(T)<Y(S)$ in eight steps.

(a) Let us prove that there exists some $ s \in \mathcal{R}_{+}$ such that $y^{0}_{s}>0$.

If for all $s \in\mathcal{R}_{+}$, $y^{0}_{s}=0$, then $y^{1}_{s}=y_{s}>x_{s}\geq x^{1}_{s}$ and, consequently, $\hat{c}^{1}(\mathbf{y})= c_{s}(y_{s})+y^{1}_{s}c'(y_{s})>c_{r}(x_{s})+x^{1}_{s}c'(x_{s})=\hat{c}^{1}(\mathbf{x})$. Moreover, $\sum_{s\in\mathcal{R}_{+}}y^{1}_{s} > \sum_{s\in\mathcal{R}_{+}}x^{1}_{s}$. But $\sum_{r\in\mathcal{R}}y^{1}_{r}=S < T = \sum_{r\in\mathcal{R}}x^{1}_{r}$. Therefore, $\sum_{t\in\mathcal{R}_{-}}y^{1}_{t} < \sum_{t\in\mathcal{R}_{-}}x^{1}_{t}$. In particular, there exists some $r\in \mathcal{R}_{-}$ such that $y^{1}_{r} < x^{1}_{r}$. Since $y_{r}\leq x_{r}$, $\hat{c}^{1}(\mathbf{y})\leq c_{r}(y_{r})+y^{1}_{r}c'(y_{r})<c_{r}(x_{r})+x^{1}_{r}c'(x_{r})=\hat{c}^{1}(\mathbf{x})$, a contradiction.

(b) Let us show that $c^{0}(\mathbf{y})>c^{0}(\mathbf{x})$.

Choose the previous $s\in \mathcal{R}_{+}$ with $y^{0}_{s}>0$, and recall that $y_{s}>x_{s}$. Then, $Y^{0}(S)=c^{0}(\mathbf{y})=c_{s}(y_{s})>c_{s}(x_{s})\geq c^{0}(\mathbf{x})=Y^{0}(T)$. One deduces that $Y^{0}(T)$ is strictly decreasing in $T$ on $[\,\tilde{T},\, 1\,]$.

(c) For all $r\in \mathcal{R}_{-}$, $x^{0}_{r}=0$, because $y_{r}\leq x_{r}$ and, consequently, $c_{r}(x_{r}) \geq c_{r}(y_{r}) \geq c^{0}(\mathbf{y}) > c^{0}(\mathbf{x})$.

(d) Let us show that $\hat{c}^{1}(\mathbf{y})<\hat{c}^{1}(\mathbf{x})$.

Recall that there exists $r\in \mathcal{R}_{-}$ such that $y_{r}<x_{r}$. Then, $y^{1}_{r}\leq y_{r} < x_{r}$, and $x^{1}_{r}=x_{r}$ according to (c). Therefore, $\hat{c}^{1}(\mathbf{y})\leq c_{r}(y_{r})+y^{1}_{r}c'(y_{r})<c_{r}(x_{r})+x^{1}_{r}c'(x_{r})=\hat{c}^{1}(\mathbf{x})$.

(e) One can show that, for all $s\in \mathcal{R}_{+}$, $y^{1}_{s}<x^{1}_{s}$ and, consequently, $y^{0}_{s}>0$, $x^{1}_{s}>0$.

Indeed, for all $s\in \mathcal{R}_{+}$, $y_{s}>x_{s}\geq 0$, hence $y^{1}_{s}>0$. If there exists some $s\in \mathcal{R}_{+}$ such that $y^{1}_{s}\geq x^{1}_{s}$, then $\hat{c}^{1}(\mathbf{y})=c_{s}(y_{s})+y^{1}_{s}c'(y_{s}) > c_{s}(x_{s})+x^{1}_{s}c'(x_{s}) \geq \hat{c}^{1}(\mathbf{x})$, i.e. $\hat{c}^{1}(\mathbf{y})>\hat{c}^{1}(\mathbf{x})$. This contradicts (d).

It follows from the fact that $y_{s}>x_{s}$ that $y^{0}_{s} > x^{0}_{s}\geq 0$.  Besides, $x^{1}_{s}>y^{1}_{s}\geq 0$.

(f) For all $r\in \mathcal{R}_{-}$ and $s\in \mathcal{R}_{+}$, $c_{r}(x_{r}) > c_{s}(x_{s})$, because $c_{r}(x_{r})\geq c_{r}(y_{r}) \geq c^{0}(\mathbf{y}) = c_{s}(y_{s}) > c_{s}(x_{s})$.

(g) Let us define an auxiliary flow $\mathbf{z}$ in the game $\Gamma(\mathcal{R},\mathbf{c},(1-T;\,T))$ by
\begin{align*}
\begin{cases}
\, z^{1}_{s}=y_{s}-x^{0}_{s},\;z^{0}_{s}=x^{0}_{s},\quad\;\, & s\in  \mathcal{R}_{+};\\
\, z^{1}_{r}=y_{r},\;z^{0}_{r}=0, & r\in  \mathcal{R}_{-}.
\end{cases}
\end{align*}
Clearly, $\mathbf{z}'=\mathbf{y}'$, i.e. for all $r\in \mathcal{R}$, $z_{r}=y_{r}$, and
\begin{align*}
&  x^{1}_{s}< z^{1}_{s} \leq y_{s},\quad z^{0}_{s}=x^{0}_{s},\hspace{2cm} s\in \mathcal{R}_{+},\\
&  z^{1}_{r}=z_{r}\leq x_{r}=x^{1}_{r},\quad z^{0}_{r}=x^{0}_{r},\hspace{1.2cm} r\in \mathcal{R}_{-}.
\end{align*}

Now, we are ready to prove that the total cost to the coalition of weight $T$ at $\mathbf{z}$ is higher than that at $\mathbf{x}$, i.e. $u^{1}_{T}(\tilde{\mathbf{y}})>u^{1}_{T}(\mathbf{x})$ (the subscript $T$ is added to stress the weight of the coalition in question). Indeed, since for all $s\in \mathcal{R}_{+}$ and for all $r \in \mathcal{R}_{-}$ such that $x_{r}>0$,
\begin{equation}\label{unique_marginal}
\hat{c}^{1}(\mathbf{x}) = c_{s}(x_{s})+x^{1}_{s}c'(x_{s})=c_{r}(x^{1}_{r})+x^{1}_{r}c'(x^{1}_{r}),
\end{equation}
one deduces that, for all $s\in \mathcal{R}_{+}$, $r \in \mathcal{R}_{-}$ such that $x_{r}>0$, and for all $x > x^{1}_{s}$ and $y$ such that $0\leq y \leq x^{1}_{r}$,
\begin{equation}\label{eq:xx}
 c_{s}(x+x^{0}_{s})+x\,c'_{s}(x+x^{0}_{s}) > \hat{c}^{1}(\mathbf{x}) \geq c_{r}(y)+y\,c'_{r}(y).
\end{equation}
Then
\begin{align*}
u^{1}_{T}(\mathbf{z}) - u^{1}_{T}(\mathbf{x}) & = \sum_{s\in \mathcal{R}_{+}}\bigl[\, z^{1}_{s}c_{s}(z_{s})-x^{1}_{s}c_{s}(x_{s})\,\bigr]-\sum_{r\in \mathcal{R}_{-}}\bigl[\, x^{1}_{r}c_{r}(x^{1}_{r})-z^{1}_{r}c_{r}(z^{1}_{r})\,\bigr]\\
& = \sum_{s\in \mathcal{R}_{+}} \int^{z^{1}_{s}}_{x^{1}_{s}}\frac{\partial}{\partial x}\bigl[x\,c_{s}(x+x^{0}_{s})\bigr]\text{d}x - \sum_{r\in \mathcal{R}_{-}} \int^{x^{1}_{r}}_{z^{1}_{r}}\frac{\partial}{\partial x}\bigl[x\,c_{r}(x)\bigr]\text{d}x \\
& = \sum_{s\in \mathcal{R}_{+}} \int^{z^{1}_{s}}_{x^{1}_{s}} \bigl[ c_{s}(x+x^{0}_{s})+x\,c'_{s}(x+x^{0}_{s})\bigr]\text{d}x - \sum_{r\in \mathcal{R}_{-}} \int^{x^{1}_{r}}_{z^{1}_{r}} \bigl[ c_{r}(x)+x\,c'_{r}(x)\bigr]\text{d}x \\
& > \sum_{s\in \mathcal{R}_{+}} (z^{1}_{s}-x^{1}_{s})\,\hat{c}^{1}(\mathbf{x})-\sum_{r\in \mathcal{R}_{-}}(x^{1}_{r}-z^{1}_{r})\,\hat{c}^{1}(\mathbf{x}) = \sum_{r\in \mathcal{R}}(z^{1}_{r}-x^{1}_{r})\,\hat{c}^{1}(\mathbf{x})=(T-T)\,\hat{c}^{1}(\mathbf{x})=0
\end{align*}
The inequality above is due to \eqref{eq:xx}.

Next, notice the following three facts.

1) For all $r\in \mathcal{R}$, $z_{r}=y_{r}$ by the definition of $\mathbf{z}$,

2) For all $s\in \mathcal{R}_{+}$, $c_{s}(y_{s})=c^{0}(\mathbf{y})$ by (e), and

3) For all $r\in \mathcal{R}_{-}$, either $z^{1}_{r}=y_{r}=y^{1}_{r}$ or $z^{1}_{r}=y_{r}>y^{1}_{r}$. In the second case, $y^{0}_{r}>0$, which implies that $c_{r}(y_{r})=c^{0}(\mathbf{y})$.

These facts induce the relation between the total cost to the coalition of weight $T$ at $\mathbf{z}$ and the total cost to the coalition of weight $S$ at $\mathbf{y}$.
\begin{align*}
u^{1}_{T}(\mathbf{z}) & =\sum_{s\in \mathcal{R}_{+}}z^{1}_{s}\,c_{s}(z_{s}) + \sum_{r\in \mathcal{R}_{-}}z^{1}_{r}\,c_{r}(z_{r})\\
& = \sum_{s\in \mathcal{R}_{+}}\bigl[\,y^{1}_{s}+(z^{1}_{s}-y^{1}_{s})\,\bigr]c_{s}(z_{s}) + \sum_{r\in \mathcal{R}_{-}}\bigl[\,y^{1}_{r}+(z^{1}_{r}-y^{1}_{r})\,\bigr]c_{r}(z_{r})\\
& = \sum_{s\in \mathcal{R}_{+}} y^{1}_{s}\,c_{s}(y_{s}) + \sum_{r\in \mathcal{R}_{-}}y^{1}_{r}\,c_{r}(y_{r})+ \sum_{s\in \mathcal{R}_{+}}(z^{1}_{s}-y^{1}_{s})\,c_{s}(y_{s})+\sum_{r\in \mathcal{R}_{-}}(z^{1}_{r}-y^{1}_{r})\,c_{r}(y_{r})\\
& = u^{1}_{S}(\mathbf{y})+\sum_{r\in \mathcal{R}}(z^{1}_{r}-y^{1}_{r})\,c^{0}(\mathbf{y})\\
& = u^{1}_{S}(\mathbf{y}) + (T-S)\,c^{0}(\mathbf{y}) = S\cdot Y^{1}(S) + (T-S)\,c^{0}(\mathbf{y}).
\end{align*}

Recall that $u^{1}_{T}(\mathbf{z})>u^{1}_{T}(\mathbf{x})$. Then,
\begin{equation}\label{eq:comp}
T \cdot Y^{1}(T)= u^{1}_{T}(\mathbf{x})\,<\, u^{1}_{T}(\mathbf{z})=S\cdot Y^{1}(S) + (T-S)\,c^{0}(\mathbf{y}).
\end{equation}

This implies that $ Y^{1}(S) > Y^{1}(T)$. Because, otherwise, according to (ii), $Y^{0}(S)=c^{0}(\mathbf{y})< Y^{1}(S)$. Then,
\begin{equation*}
 S\cdot Y^{1}(S) + (T-S)\,c^{0}(\mathbf{y}) < T\cdot Y^{1}(S) \leq T \cdot Y^{1}(T),
\end{equation*}
which contradicts \eqref{eq:comp}.

Therefore, $Y^{1}(S) > Y^{1}(T)$. One deduces that $Y^{1}(T)$ is strictly decreasing in $T$ on $[\,\tilde{T},\, 1\,]$.

(h) Finally, let us prove that $Y(S) > Y(T)$, i.e. the social cost at $\mathbf{y}$ is higher than at $\mathbf{x}$. In other words, $Y(T)$ is strictly decreasing in $T$ on $[\,\tilde{T},\, 1\,]$.

Indeed, \eqref{unique_marginal} implies that, for all $s\in \mathcal{R}_{+}$ and $r \in \mathcal{R}_{-}$ such that $x_{r}>0$,
\begin{equation*}
c_{s}(x_{s})+x_{s}c'(x_{s})\geq \hat{c}^{1}(\mathbf{x}) \geq c_{r}(x_{r})+x_{r}c'(x_{r}).
\end{equation*}
Then, for all $s\in \mathcal{R}_{+}$ and $r \in \mathcal{R}_{-}$ such that $x_{r}>0$, for all $u>x_{s}$ and $v$ such that $0\leq v \leq x_{r}$,
\begin{equation}\label{eq:social}
 c_{s}(u)+u\,c'_{s}(u) > \hat{c}^{1}(\mathbf{x}) \geq c_{r}(v)+v\,c'_{r}(v).
\end{equation}
Thus,
\begin{align*}
Y(S)-Y(T) & =\sum_{s\in \mathcal{R}_{+}}\bigl[ \, y_{s}c_{s}(y_{s})-x_{s}c_{s}(x_{s}) \, \bigr] - \sum_{r\in \mathcal{R}_{-}}\bigl[ \, x_{r}c_{r}(x_{r})- y_{r}c_{r}(y_{r}) \, \bigr]\\
& = \sum_{s\in \mathcal{R}_{+}}\int^{y_{s}}_{x_{s}}\frac{\partial}{\partial u} u\,c_{s}(u)\,\text{d}u - \sum_{r\in \mathcal{R}_{-}}\int^{x_{r}}_{y_{r}}\frac{\partial}{\partial v} v\,c_{r}(v)\,\text{d}v\\
& =\int^{y_{s}}_{x_{s}} \bigl[ \,c_{s}(u)+u\,c'_{s}(u) \, \bigr]\,\text{d}u - \sum_{r\in \mathcal{R}_{-}}\int^{x_{r}}_{y_{r}}  \bigl[ \,c_{r}(v)+v\,c'_{r}(v) \, \bigr]\,\text{d}v\\
& > \sum_{s\in \mathcal{R}_{+}}(y_{s}-x_{s})\,\hat{c}^{1}(\mathbf{x})- \sum_{r\in \mathcal{R}_{-}}(x_{r}-y_{r})\,\hat{c}^{1}(\mathbf{x})=0,
\end{align*}
where the inequality is due to \eqref{eq:social}.
\end{proof}

\subsection{Individuals' cost and the composition of the players}
The previous results can be partially extended to the multiple coalitions case. Consider the following two composite games.
\begin{align*}
&\Gamma_{0}=\Gamma(\mathcal{R},\mathbf{c},\mathbf{T}),\;\,\;\mathbf{T} =\bigl(T^{0};T^{1},\ldots,T^{K}\bigr),\\
&\Gamma_{1}=\Gamma(\mathcal{R},\mathbf{c},\mathbf{T}'),\;\;\mathbf{T}'=\bigl(T^{0}+\delta T;T^{1},\ldots,T^{l-1},T^{l}-\delta T, T^{l+1},\ldots,T^{K}\bigr),
\end{align*}
with $K\geq 1$, $1 \leq l \leq K$ and $0< \delta T < T^{l}$. Profile $\mathbf{T}'$ can be seen as obtained by $\mathbf{T}$ after the withdrawal from coalition $l$ of a group of members of total weight $\delta T$ who become individuals. Let $\mathbf{x}$ and $\mathbf{y}$ be, respectively, the CE of the game $\Gamma_{0}$ and that of the game $\Gamma_{1}$. The other notations are as before.

The following theorem shows that the individuals' cost is (weakly) higher at the CE in the game $\Gamma_{1}$ than in the game $\Gamma_{0}$.
\begin{theorem}\label{thm:monotonie}
$c^{0}(\mathbf{x})\leq c^{0}(\mathbf{y})$.
\end{theorem}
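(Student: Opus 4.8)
The plan is to follow the scheme of the proof of Theorem~\ref{efficiency_size}(iii), the single-coalition version, and carry it through with several coalitions present. Partition the arcs according to how the aggregate load moves from $\mathbf{x}$ to $\mathbf{y}$: put $\mathcal{R}_{+}=\{r\in\mathcal{R}:y_{r}>x_{r}\}$ and $\mathcal{R}_{-}=\{r\in\mathcal{R}:y_{r}\le x_{r}\}$. If $\mathcal{R}_{+}=\emptyset$ then, since $\sum_{r}x_{r}=\sum_{r}y_{r}=1$, one gets $x_{r}=y_{r}$ for every $r$, hence $\mathbf{x}'=\mathbf{y}'$ and $c^{0}(\mathbf{x})=\min_{r}c_{r}(x_{r})=\min_{r}c_{r}(y_{r})=c^{0}(\mathbf{y})$. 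So assume henceforth $\mathcal{R}_{+}\neq\emptyset$. The whole matter then reduces to one claim: \emph{there is an arc $s\in\mathcal{R}_{+}$ with $y^{0}_{s}>0$.} Granting it, arc $s$ is used by the individuals at $\mathbf{y}$, so $c_{s}(y_{s})=c^{0}(\mathbf{y})$ by \eqref{cost_nonatom}; and since $y_{s}>x_{s}$ with $c_{s}$ strictly increasing, $c^{0}(\mathbf{y})=c_{s}(y_{s})>c_{s}(x_{s})\ge c^{0}(\mathbf{x})$, which even yields the strict inequality whenever $\mathbf{x}'\neq\mathbf{y}'$.

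To prove the claim I would argue by contradiction, assuming $y^{0}_{s}=0$ for every $s\in\mathcal{R}_{+}$. Then on each such arc $\sum_{k=1}^{K}y^{k}_{s}=y_{s}>x_{s}\ge\sum_{k=1}^{K}x^{k}_{s}$, so summing over the nonempty set $\mathcal{R}_{+}$ gives $\sum_{s\in\mathcal{R}_{+}}\sum_{k=1}^{K}(y^{k}_{s}-x^{k}_{s})>0$. Introduce $\mathcal{K}_{+}=\{k\in\{1,\dots,K\}:\hat{c}^{k}(\mathbf{y})>\hat{c}^{k}(\mathbf{x})\}$ and record two complementary facts, each an easy consequence of the cost functions being increasing and convex together with the marginal-cost equalization \eqref{cost_atom}. \emph{(a)} If $k\in\mathcal{K}_{+}$, then $y^{k}_{r}\ge x^{k}_{r}$ for every $r\in\mathcal{R}_{-}$: indeed if $y^{k}_{r}<x^{k}_{r}$ then $x^{k}_{r}>0$, so $\hat{c}^{k}_{r}(\mathbf{x})=\hat{c}^{k}(\mathbf{x})$ and hence $\hat{c}^{k}_{r}(\mathbf{y})\ge\hat{c}^{k}(\mathbf{y})>\hat{c}^{k}(\mathbf{x})=\hat{c}^{k}_{r}(\mathbf{x})$, whereas $y_{r}\le x_{r}$ and $y^{k}_{r}<x^{k}_{r}$ force $\hat{c}^{k}_{r}(\mathbf{y})\le\hat{c}^{k}_{r}(\mathbf{x})$ --- a contradiction. \emph{(b)} If $y^{k}_{s}>x^{k}_{s}$ for some $s\in\mathcal{R}_{+}$, then $k\in\mathcal{K}_{+}$: indeed $y^{k}_{s}>0$, so $\hat{c}^{k}(\mathbf{y})=\hat{c}^{k}_{s}(\mathbf{y})=c_{s}(y_{s})+y^{k}_{s}c'_{s}(y_{s})>c_{s}(x_{s})+x^{k}_{s}c'_{s}(x_{s})=\hat{c}^{k}_{s}(\mathbf{x})\ge\hat{c}^{k}(\mathbf{x})$.

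The contradiction is then closed by a counting argument pitting the global weight balance against the local surplus on $\mathcal{R}_{+}$. For $k\in\mathcal{K}_{+}$, summing \emph{(a)} over $\mathcal{R}_{-}$ gives $\sum_{r\in\mathcal{R}_{-}}(y^{k}_{r}-x^{k}_{r})\ge0$; since the weight of coalition $k$ changes between $\Gamma_{0}$ and $\Gamma_{1}$ by $0$ if $k\neq l$ and by $-\delta T<0$ if $k=l$, it follows that $\sum_{s\in\mathcal{R}_{+}}(y^{k}_{s}-x^{k}_{s})\le0$ for every $k\in\mathcal{K}_{+}$. Hence the strict inequality $\sum_{s\in\mathcal{R}_{+}}\sum_{k=1}^{K}(y^{k}_{s}-x^{k}_{s})>0$ must receive a positive contribution from some $k_{0}\in\{1,\dots,K\}\setminus\mathcal{K}_{+}$, so $y^{k_{0}}_{s}>x^{k_{0}}_{s}$ for some $s\in\mathcal{R}_{+}$; but then \emph{(b)} puts $k_{0}\in\mathcal{K}_{+}$, a contradiction. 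This establishes the claim, and with it the theorem.

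The estimates in \emph{(a)} and \emph{(b)} and the reduction in the first step are routine --- exactly as in the single-coalition proof, they follow from $c_{r}$ being increasing, $c'_{r}$ nonnegative and nondecreasing (Assumption~\ref{cost_assumption_1}), and from the equilibrium conditions \eqref{cost_nonatom}--\eqref{cost_atom}. The genuinely new ingredient, and the one I expect to be the main obstacle, is to isolate the dichotomy $\mathcal{K}_{+}$ and to notice that \emph{(a)} and \emph{(b)} are exact complements, so that conservation of total coalition weight --- which drops by exactly $\delta T$ from $\Gamma_{0}$ to $\Gamma_{1}$ --- can be played against the arc-by-arc surplus on $\mathcal{R}_{+}$ to finish; with a single coalition this last step is trivial since there is only one index to track.
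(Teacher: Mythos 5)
Your proof is correct. It shares the paper's skeleton --- partition into $\mathcal{R}_{+}$ and $\mathcal{R}_{-}$, dispose of the case $\mathcal{R}_{+}=\emptyset$, reduce everything to producing an arc $s\in\mathcal{R}_{+}$ with $y^{0}_{s}>0$, and conclude by strict monotonicity of $c_{s}$ --- but the way you establish the key claim is genuinely different in its bookkeeping. The paper first proves, for every coalition $k\neq l$, the inequality $\sum_{r\in\mathcal{R}_{-}}y^{k}_{r}\geq\sum_{r\in\mathcal{R}_{-}}x^{k}_{r}$ through a contradiction with a two-case analysis (ending in all-or-nothing configurations $T^{k}$ or $0$), deduces $\sum_{s\in\mathcal{R}_{+}}(y^{0}_{s}+y^{l}_{s})>\sum_{s\in\mathcal{R}_{+}}(x^{0}_{s}+x^{l}_{s})$, and then closes with an argument tailored to coalition $l$ alone, exploiting that its weight strictly decreases so that its flow must drop somewhere on $\mathcal{R}_{-}$ and rise somewhere on $\mathcal{R}_{+}$, giving contradictory comparisons of $\hat{c}^{l}$. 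You instead introduce the dichotomy $\mathcal{K}_{+}=\{k:\hat{c}^{k}(\mathbf{y})>\hat{c}^{k}(\mathbf{x})\}$, prove the complementary facts (a) and (b) from \eqref{cost_atom} and monotonicity/convexity, and finish with a single counting argument that only uses the fact that no coalition's total weight increases from $\Gamma_{0}$ to $\Gamma_{1}$. This buys you a uniform treatment of all coalitions (no special role for $l$ beyond $\sum_{r}(y^{l}_{r}-x^{l}_{r})=-\delta T\leq 0$), it avoids the paper's case analysis, and it covers $K=1$ directly, whereas the paper delegates that case to Theorem~\ref{efficiency_size}; as a bonus your argument makes explicit that the inequality is strict whenever $\mathbf{x}'\neq\mathbf{y}'$. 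The paper's route, on the other hand, yields along the way the per-coalition information $\sum_{r\in\mathcal{R}_{-}}y^{k}_{r}\geq\sum_{r\in\mathcal{R}_{-}}x^{k}_{r}$ for $k\neq l$, which is of some independent interest but is not needed for the statement itself.
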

\begin{proof}
Since the case $K=1$ is proven in Theorem~\ref{efficiency_size}, only the case $K\geq 2$ is treated here.

First, define two sets of arcs $\mathcal{R}_{-}=\{r\in \mathcal{R}\,|\,y_{r}\leq x_{r}\}$ and $\mathcal{R}_{+}=\mathcal{R}\setminus \mathcal{R}_{-}=\{r\in \mathcal{R}\,|\,y_{r}>x_{r}\}$.

If $\mathcal{R}_{+}=\emptyset$, then $\mathbf{x}'=\mathbf{y}'$ and $c^{0}(\mathbf{x})=c^{0}(\mathbf{y})$.

If $\mathcal{R}_{+} \neq \emptyset$, let us first prove that, for all $k\in \mathcal{K}\setminus\{l\}$, $\sum_{r\in \mathcal{R}_{-}} y^{k}_{r} \geq \sum_{r\in \mathcal{R}_{-}} x^{k}_{r}$.

Suppose that $\sum_{r\in \mathcal{R}_{-}} y^{k}_{r} \leq \sum_{r\in \mathcal{R}_{-}} x^{k}_{r}$ and, consequently, $\sum_{s\in \mathcal{R}_{+}} y^{k}_{s} \geq \sum_{s\in \mathcal{R}_{+}} x^{k}_{s}$. Therefore, there is an arc $r\in \mathcal{R}_{-}$ and an arc $s\in \mathcal{R}_{+}$ such that $y^{k}_{r} \leq x^{k}_{r}$ and $y^{k}_{s} \geq x^{k}_{s}$.

For all such $r$ and $s$, if $x^{k}_{r}>0$ and $y^{k}_{s}>0$, then $\hat{c}^{k}(\mathbf{y}) \leq c_{r}(y_{r})+y^{k}_{r}c'_{r}(y_{r})
\leq c_{r}(x_{r})+x^{k}_{r}c'_{r}(x_{r}) = \hat{c}^{k}(\mathbf{x})
\leq\, c_{s}(x_{s})+x^{k}_{s}c'_{s}(x_{s}) < c_{s}(y_{s})+y^{k}_{s}c'_{s}(y_{s}) = \hat{c}^{k}(\mathbf{y})$,
a contradiction. In consequence, either $y^{k}_{r} = x^{k}_{r} =0$ or $y^{k}_{s} = x^{k}_{s} =0$. For this to be true, there can be two cases.

{\sc Case 1.} For all $s\in \mathcal{R}_{+}$ such that $y^{k}_{s} \geq x^{k}_{s}$, $y^{k}_{s} = x^{k}_{s} =0$. Then, there is no $s\in \mathcal{R}_{+}$ such that $y^{k}_{s} < x^{k}_{s}$ because, otherwise, $\sum_{s\in \mathcal{R}_{+}} y^{k}_{s} < \sum_{s\in \mathcal{R}_{+}} x^{k}_{s}$, which contradicts the hypothesis that $\sum_{r\in \mathcal{R}_{-}} y^{k}_{r} \leq \sum_{r\in \mathcal{R}_{-}} x^{k}_{r}$. Thus, for all $s\in \mathcal{R}_{+}$, $y^{k}_{s} = x^{k}_{s}=0$ and, consequently, $\sum_{r\in \mathcal{R}_{-}} y^{k}_{r} = \sum_{r\in \mathcal{R}_{-}} x^{k}_{r}= T^{k}$.

{\sc Case 2.} There exists some $s\in \mathcal{R}_{+}$ such that $y^{k}_{s} \geq x^{k}_{s}$ and $y^{k}_{s}>0$. Then, for all $r\in \mathcal{R}_{-}$ such that $y^{k}_{r} \leq x^{k}_{r}$, $y^{k}_{r} = x^{k}_{r} =0$. Therefore, there is no $r\in \mathcal{R}_{-}$ such that $y^{k}_{r} > x^{k}_{r}$ because, otherwise, $\sum_{r\in \mathcal{R}_{-}} y^{k}_{r} > \sum_{r\in \mathcal{R}_{-}} x^{k}_{r}$, which again contradicts the hypothesis. Thus, for all $r\in \mathcal{R}_{-}$, $y^{k}_{r} = x^{k}_{r} =0$ and, in consequence, $\sum_{r\in \mathcal{R}_{-}} y^{k}_{r} = \sum_{r\in \mathcal{R}_{-}} x^{k}_{r} =0$.

Hence, $\sum_{r\in \mathcal{R}_{-}} y^{k}_{r} \geq \sum_{r\in \mathcal{R}_{-}} x^{k}_{r}$ and, consequently, $\sum_{s\in \mathcal{R}_{+}} y^{k}_{s} \leq \sum_{s\in \mathcal{R}_{+}} x^{k}_{s}$. Besides, the equalities hold if, and only if, $\sum_{r\in \mathcal{R}_{-}} y^{k}_{r} = \sum_{r\in \mathcal{R}_{-}} x^{k}_{r}= T^{k}$ or $0$.

Since $\sum_{s\in \mathcal{R}_{+}} y_{s} > \sum_{s\in \mathcal{R}_{+}} x_{s}$ and $\sum_{s\in \mathcal{R}_{+}} y^{k}_{s} \leq \sum_{s\in \mathcal{R}_{+}} x^{k}_{s}$ for all $k\in \mathcal{K}\setminus \{l\}$, one deduces that $\sum_{s\in \mathcal{R}_{+}} \bigl(y^{0}_{s}+y^{l}_{s}\bigr) > \sum_{s\in \mathcal{R}_{+}} \bigl(x^{0}_{s}+x^{l}_{s}\bigr)\geq 0$.

Let us show that there exists some $t\in \mathcal{R}_{+}$ such that $y^{0}_{t}>0$. Indeed, if for all $s \in\mathcal{R}_{+}$, $y^{0}_{s}=0$, then $\sum_{s\in\mathcal{R}_{+}}y^{l}_{s} > \sum_{s\in\mathcal{R}_{+}}(x^{l}_{s}+x^{0}_{s})\geq \sum_{s\in\mathcal{R}_{+}}x^{l}_{s}$. Besides, $\sum_{r\in\mathcal{R}}y^{l}_{r}=T-\delta T < T = \sum_{r\in\mathcal{R}}x^{l}_{r}$, hence $\sum_{t\in\mathcal{R}_{-}}y^{l}_{t} < \sum_{t\in\mathcal{R}_{-}}x^{l}_{t}$. In particular, there exists $r\in \mathcal{R}_{-}$ such that $y^{l}_{r} < x^{l}_{r}$ and $s\in \mathcal{R}_{+}$ such that $y^{l}_{s} > x^{l}_{s}$. Then, $\hat{c}^{l}(\mathbf{y})\leq c_{r}(y_{r})+y^{l}_{r}c'(y_{r})<c_{r}(x_{r})+x^{l}_{r}c'(x_{r})=\hat{c}^{l}(\mathbf{x})$, and $\hat{c}^{l}(\mathbf{y})=c_{s}(y_{s})+y^{l}_{s}c'(y_{s})>c_{s}(x_{s})+x^{l}_{s}c'(x_{s})\geq \hat{c}^{l}(\mathbf{x})$, a contradiction. Thus, there exists $t\in \mathcal{R}_{+}$ such that $y^{0}_{t} > 0$ and, consequently, $c^{0}(\mathbf{y})=c_{t}(y_{t})>c_{t}(x_{t}) \geq c^{0}(\mathbf{x})$.
\end{proof}

However, the average cost to the coalition of weight $T^{l}$ (called coalition $l$) in $\Gamma_{0}$ is not necessarily lower than that to the coalition of weight $T^{l}-\delta T$ (called coalition $l'$) in $\Gamma_{1}$, and the former is not necessarily lower than the average cost to the group composed of coalition $l'$ and a set of individuals of total weight $\delta T$ in $\Gamma_{1}$ (the group corresponding to coalition $l$ in $\Gamma_{0}$). In other words, the other two results in Theorem~\ref{efficiency_size} (iii) cannot be extended to the multiple coalitions case. Here is an example.

\begin{example}\label{ex:comp}\rm
Composite game $\Gamma$ takes place in the same network as in Example~\ref{ex:1}, where the cost functions of arcs $r_{1}$ and $r_{2}$ are, respectively, $c_{1}(x)=x+10$ and $c_{2}(x)=10x+1$. Coalition $1$ has weight $T$ with $T\in (0,\frac{1}{2}]$; coalition $2$ has weight $\frac{1}{2}$; the total weight of the individuals is $\frac{1}{2}-T$. Then, the average cost to coalition $1$ at the CE is $\frac{91}{11}$ for $T\in (0,\frac{1}{5}]$ and $\frac{1}{99}[919-4(25T+\frac{4}{T})]$ for $T\in [\frac{1}{5},\frac{1}{2}]$, which is constant in $T$ on $(0,\frac{1}{5}]$, strictly increasing on $[\frac{1}{5},\frac{2}{5}]$ and strictly decreasing on $[\frac{2}{5},\frac{1}{2}]$. Therefore, it is not always decreasing in the size of the coalition. The average cost to the group of total weight $\frac{1}{2}$ composed of coalition $1$ and all the individuals is $\frac{91}{11}$ for $T\in (0,\frac{1}{5}]$ and $\frac{1}{99}[400(T-\frac{11}{40})^{2}+816.75]$ for $T\in [\frac{1}{5},\frac{1}{2}]$, which is constant in $T$ on $(0,\frac{1}{5}]$, strictly decreasing in $(\frac{1}{5},\frac{11}{40}]$ and strictly increasing on $[\frac{11}{40},\frac{1}{2}]$. In consequence, it is not always decreasing in $T$.
\end{example}

\section{Asymptotic behavior of composite games}\label{sect6}
This subsection studies the asymptotic behavior of composite games, when some coalitions are fixed and the size of the others vanish.

\begin{definition}[Admissible sequence of composite games and its limit game]
A sequence of composite games $\{\Gamma_{n}\}_{n\in \mathbb{N}^{*}}$, with $\Gamma_{n}=\Gamma(\mathcal{R},\mathbf{c},\mathbf{T}_{n})$ and $\mathbf{T}_{n}=(T^{0}_{n};\,T^{1}_{n},\,T^{2}_{n},\ldots,\,T^{K_{n}}_{n})$, is called {\em admissible} if $\{\mathbf{T}_{n}\}_{n\in \mathbb{N}^{*}}$ satisfies the following conditions.
\begin{enumerate}
 \item There is a constant $L \in \mathbb{N}$, and $L$ strictly positive constants $\{T^{1},\,T^{2},\ldots,\,T^{L}\}$ such that $\sum^{L}_{k=1}T^{k}<1$. For all $n$, $K_{n}>L$, and $T^{i}_{n}=T^{i}$ for $i=1,\ldots,L$.
 \item $\delta_{n}=\max_{L<k\leq K_{n}}T^{k}_{n}$. And $\delta_{n}\rightarrow 0$ as $n\rightarrow \infty$.
\end{enumerate}

The $L$-coalition composite game $\Gamma_{0}=\Gamma(\mathcal{R},\mathbf{c},(\tilde{T}^{0};\,T^{1},\,T^{2},\ldots,\,T^{L}))$ is called the {\em limit game} of the sequence $\{\Gamma_{n}\}_{n\in \mathbb{N}^{*}}$, where $\tilde{T}^{0}=1-\sum^{L}_{k=1}T^{k}$.
\end{definition}
\begin{remark}\rm
Condition (i) means that there are $L$ coalitions fixed all along the sequence $\{\Gamma_{n}\}_{n\in \mathbb{N}^{*}}$, and the total weight of the remaining coalitions and the individuals is fixed to $\tilde{T}^{0}$. Condition (ii) means that the other coalitions are vanishing along the sequence and, necessarily, $K_{n}$ tends to infinity. 
\end{remark}

\noindent {\sc Notations} As before, in the game $\Gamma_{n}$, $\mathbf{x}^{*}_{n}=(\mathbf{x}^{*k}_{n})^{K_{n}}_{k=0}$ is the CE, where $\mathbf{x}^{*0}_{n}$ is the flow of the individuals, and $\mathbf{x}^{*k}_{n}$ the flow of coalition $k$. Besides, $Y^{0}(\mathbf{x}^{*}_{n})$ is the individuals' cost and $Y^{k}(\mathbf{x}^{*}_{n})$ the average cost to coalition $k$ at CE.

The aggregate flows are defined as ${\mathbf{x}^{*}_{n}}'=(\mathbf{y}^{*}_{n},\,\mathbf{x}^{*1}_{n},\,\mathbf{x}^{*2}_{n},\,\ldots,\,\mathbf{x}^{*L}_{n})$, where $\mathbf{y}^{*}_{n}=(y^{*}_{n,r})_{r\in \mathcal{R}}$, $y^{*}_{n,r}=x^{*0}_{n,r}+\sum^{K_{n}}_{k=L+1}x^{*k}_{n,r}$. Thus, $\mathbf{y}^{*}_{n}$ is the aggregate flow of the individuals in addition to all the coalitions different from the $L$ fixed ones. Notice that this is different from the definition of aggregate flow in the previous sections.

$F_{n}=\{\mathbf{x}\in \mathbb{R}^{R\times(1+K_{n})}\,|\, \mathbf{x}\geq \mathbf{0};\;\forall\, k=0 \text{ or } k\in \mathcal{K}, \,\sum_{r\in \mathcal{R}}x^{k}_{r}=T^{k}_{n}\}$ is the feasible flow set. $F_{0}=\{\mathbf{x}\in \mathbb{R}^{R\times(1+L)}\,|\, \mathbf{x}\geq \mathbf{0};\;\forall\, k\in \mathcal{K},\,\sum_{r\in \mathcal{R}}x^{k}_{r}=T^{k}; \;\sum_{r\in \mathcal{R}}x^{0}_{r}=\tilde{T}^{0}\}$ is the feasible aggregate flow set. Notice that it is common to all the games in $\{\Gamma_{n}\}_{n\in \mathbb{N}^{*}}$.

In $\Gamma_{0}$, $\mathbf{x}^{*}=(\mathbf{y}^{*},\,\mathbf{x}^{*1},\,\ldots,\,\mathbf{x}^{*L})$ is the CE, where $\mathbf{y}^{*}=\mathbf{x}^{*0}$ is the flow of the individuals, and $\mathbf{x}^{*k}$ the flow of coalition $k$. $Y^{0}(\mathbf{x}^{*})$ is the individuals' cost and $Y^{k}(\mathbf{x}^{*})$ the average cost to coalition $k$ at CE. The feasible flow set is $F_{0}$.\\

The following theorem states that the CE of $\Gamma_{n}$ converges to the CE of $\Gamma_{0}$. Hence, it justifies the name `limit game'.
\begin{theorem}[Convergence of admissible composite games]\label{limit_mixed}
Suppose that $\{\Gamma_{n}\}_{n\in \mathbb{N}^{*}}$ is a sequence of admissible games satisfying Assumption~\ref{cost_assumption_1}. Let $\Gamma_{0}$ be its limit game. Then, ${\mathbf{x}^{*}_{n}}' \rightarrow \mathbf{x}^{*}$ as $n\rightarrow \infty$. In particular, $Y^{k}(\mathbf{x}^{*}_{n}) \rightarrow Y^{k}(\mathbf{x}^{*})$ for $k=1,\ldots,L$, and $Y^{k}(\mathbf{x}^{*}_{n}) \rightarrow Y^{0}(\mathbf{x}^{*})$ for $k=0$ and $k>L$.
\end{theorem}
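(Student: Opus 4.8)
The plan is to combine compactness of the common feasible set $F_0$ with the uniqueness of the CE (Theorem~\ref{unique_mix}). Since every aggregate flow ${\mathbf{x}^*_n}'$ lies in the fixed compact set $F_0$, it suffices to show that every convergent subsequence converges to the CE $\mathbf{x}^*$ of the limit game $\Gamma_0$; the convergence of the whole sequence then follows, and the cost statements are deduced afterwards by continuity. So I would fix a subsequence with ${\mathbf{x}^*_{n_j}}' \to \bar{\mathbf{x}} = (\bar{\mathbf{x}}^0,\bar{\mathbf{x}}^1,\ldots,\bar{\mathbf{x}}^L) \in F_0$ (note the aggregate on each arc then also converges, to that of $\mathbf{x}^*$), and check that $\bar{\mathbf{x}}$ satisfies the marginal cost characterization \eqref{cost_nonatom}--\eqref{cost_atom} of the CE of $\Gamma_0$ provided by Proposition~\ref{eq_characterize}.

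For a fixed coalition $k$ with $1\le k\le L$ this part is routine: if $\bar{x}^k_r>0$ then $x^{*k}_{n_j,r}>0$ for $j$ large, so \eqref{cost_atom} in $\Gamma_{n_j}$ gives $c_r(x^*_{n_j,r})+x^{*k}_{n_j,r}c'_r(x^*_{n_j,r})\le c_s(x^*_{n_j,s})+x^{*k}_{n_j,s}c'_s(x^*_{n_j,s})$ for all $s$, and letting $j\to\infty$ (the aggregate flows converge coordinatewise, and $c_r,c'_r$ are continuous by Assumption~\ref{cost_assumption_1}) yields the same inequality at $\bar{\mathbf{x}}$. The delicate point is the individuals' condition, because the aggregate flow $\mathbf{y}^*_n$ lumps the genuine individuals together with all the vanishing coalitions. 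The idea is: if $\bar{x}^0_r=\lim_j y^*_{n_j,r}>0$, then for $j$ large either $x^{*0}_{n_j,r}>0$, in which case \eqref{cost_nonatom} gives $c_r(x^*_{n_j,r})\le c_s(x^*_{n_j,s})$ directly, or some vanishing coalition $k_j>L$ has $x^{*k_j}_{n_j,r}>0$, in which case \eqref{cost_atom} applied to $k_j$, after dropping the nonpositive term $-x^{*k_j}_{n_j,r}c'_r(x^*_{n_j,r})$, gives $c_r(x^*_{n_j,r})\le c_s(x^*_{n_j,s})+x^{*k_j}_{n_j,s}c'_s(x^*_{n_j,s})\le c_s(x^*_{n_j,s})+\delta_{n_j}M$, where $M$ is a finite bound for all the derivatives $c'_r$ on $[\,0,\,1\,]$ (continuous, hence bounded, by Assumption~\ref{cost_assumption_1}, and each coordinate of an aggregate flow lies in $[\,0,\,1\,]$). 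Either way $c_r(x^*_{n_j,r})\le c_s(x^*_{n_j,s})+\delta_{n_j}M$, and since $\delta_{n_j}\to0$ (admissibility), passing to the limit gives $c_r(\bar{x}_r)\le c_s(\bar{x}_s)$. Thus $\bar{\mathbf{x}}$ is a CE of $\Gamma_0$, and by uniqueness $\bar{\mathbf{x}}=\mathbf{x}^*$; hence ${\mathbf{x}^*_n}'\to\mathbf{x}^*$.

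For the cost convergence, the case $k=1,\ldots,L$ is immediate from the flow convergence: $Y^k(\mathbf{x}^*_n)=\frac{1}{T^k}\sum_r x^{*k}_{n,r}c_r(x^*_{n,r})\to\frac{1}{T^k}\sum_r x^{*k}_r c_r(x^*_r)=Y^k(\mathbf{x}^*)$, and likewise $Y^0(\mathbf{x}^*_n)=c^0(\mathbf{x}^*_n)=\min_r c_r(x^*_{n,r})\to\min_r c_r(x^*_r)=Y^0(\mathbf{x}^*)$. For a vanishing coalition $k>L$ I would show its average cost is squeezed onto the individuals' cost: by Lemma~\ref{me_flow_1}(iii) there is $r_0\in\mathcal{R}^k$ with $c_{r_0}(x^*_{n,r_0})=c^0(\mathbf{x}^*_n)$, on which $k$ carries weight at most $\delta_n$, so $\hat{c}^{k}(\mathbf{x}^*_n)=c^0(\mathbf{x}^*_n)+x^{*k}_{n,r_0}c'_{r_0}(x^*_{n,r_0})\le c^0(\mathbf{x}^*_n)+\delta_n M$; since every arc used by $k$ has cost between $c^0(\mathbf{x}^*_n)$ and $\hat{c}^{k}(\mathbf{x}^*_n)$, averaging gives $0\le Y^k(\mathbf{x}^*_n)-c^0(\mathbf{x}^*_n)\le\delta_n M\to0$, so $Y^k(\mathbf{x}^*_n)\to c^0(\mathbf{x}^*)=Y^0(\mathbf{x}^*)$.

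The step I expect to be the main obstacle is the individuals' condition in the second paragraph: one must control, uniformly over the (unboundedly many) vanishing coalitions, the discrepancy between the Wardrop-type condition one wants for the aggregate flow $\mathbf{y}^*_n$ and the Nash-type conditions that each vanishing coalition actually satisfies. This is exactly where admissibility ($\delta_n\to0$) and the uniform bound on $c'_r$ coming from Assumption~\ref{cost_assumption_1} are used; the remainder is a standard compactness-plus-uniqueness argument.
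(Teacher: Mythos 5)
Your argument is correct, but it takes a somewhat different route from the paper's. The paper works with the variational-inequality characterization of Proposition~\ref{eq_characterize}(iii): for each test flow in $F_{0}$ it builds a lifted test flow in $F_{n}$, applies the exact VI of $\Gamma_{n}$, and absorbs the contribution of the vanishing coalitions into an error term $\epsilon_{n}=2\delta_{n}MR$, so that the lumped flow ${\mathbf{x}^{*}_{n}}'$ satisfies an approximate VI; compactness then gives convergence of a subsequence, the approximate VI passes to the exact VI of $\Gamma_{0}$, and uniqueness identifies the limit. You instead pass to the limit in the arc-wise marginal-cost characterization of Proposition~\ref{eq_characterize}(ii): the conditions for the $L$ fixed coalitions survive by continuity, and for the lumped ``individuals'' you note that positive limiting weight on an arc forces, for large $n$, either a genuine individual or some vanishing coalition to use it, whose own optimality condition is within $\delta_{n}M$ of the Wardrop condition. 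Both proofs share the same skeleton (compactness of $F_{0}$, uniqueness of the CE, and an $O(\delta_{n}M)$ control coming from admissibility and Assumption~\ref{cost_assumption_1}); what yours buys is that it avoids the VI formalism and the construction of lifted test flows, and it makes explicit the convergence $Y^{k}(\mathbf{x}^{*}_{n})\to Y^{0}(\mathbf{x}^{*})$ for the vanishing coalitions $k>L$ via the squeeze $c^{0}(\mathbf{x}^{*}_{n})\le Y^{k}(\mathbf{x}^{*}_{n})\le \hat{c}^{k}(\mathbf{x}^{*}_{n})\le c^{0}(\mathbf{x}^{*}_{n})+\delta_{n}M$, a point the paper dismisses as immediate although it does not follow from flow convergence alone. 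Two small touch-ups: your appeal to Lemma~\ref{me_flow_1}(iii) presupposes $T^{0}_{n}>0$; it is cleaner (and covers the case with no individuals, which the admissibility definition allows) to compare the marginal cost of any arc used by coalition $k$ directly with $\hat{c}^{k}_{r^{\flat}}(\mathbf{x}^{*}_{n})$ at a minimum-cost arc $r^{\flat}$, which gives $\hat{c}^{k}(\mathbf{x}^{*}_{n})\le c^{0}(\mathbf{x}^{*}_{n})+\delta_{n}M$ without any lemma; and when you first extract the convergent subsequence, the per-arc aggregate converges to the aggregate of $\bar{\mathbf{x}}$, not (yet) to that of $\mathbf{x}^{*}$ --- that identification only comes after uniqueness is invoked.
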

\begin{proof}
Let us begin by writing the variational inequality condition for the CE's $\mathbf{x}^{*}_{n}$ and $\mathbf{x}^{*}$.

By Proposition~\ref{eq_characterize}, $\mathbf{x}^{*}_{n}$ is the CE of $\Gamma_{n}$ if, and only if,
\begin{equation}\label{VI_n}
\left\langle\,\mathbf{c}(\mathbf{x}^{*}_{n}),\,\mathbf{x}^{0}_{n}-\mathbf{x}^{*0}_{n}\,\right\rangle+\sum^{K_{n}}_{k=1}\left\langle\,\hat{\mathbf{c}}^{k}(\mathbf{x}^{*}_{n}),\,\mathbf{x}^{k}_{n}-\mathbf{x}^{*k}_{n}\,\right\rangle\geq 0, \quad \forall\; \mathbf{x}_{n} \in F_{n},
\end{equation}
and $\mathbf{x}^{*}$ is the CE of $\Gamma_{0}$ if, and only if,
\begin{equation}\label{VI_limit}
\left\langle\,\mathbf{c}(\mathbf{x}^{*}),\,\mathbf{y}-\mathbf{y}^{*}\,\right\rangle+\sum^{L}_{k=1}\left\langle\,\hat{\mathbf{c}}^{k}(\mathbf{x}^{*}),\,\mathbf{x}^{k}-\mathbf{x}^{*k}\,\right\rangle\geq 0, \quad \forall\; \mathbf{x}= (\mathbf{y},\,\mathbf{x}^{1},\,\ldots,\,\mathbf{x}^{L}) \in F_{0}.
\end{equation}

Due to Assumption~\ref{cost_assumption_1}, one can find a constant $M$ such that $M>\sup_{r\in \mathcal{R},\,x\in [0,\,1]}\{|c'_{r}(x)|\}$. Set $\epsilon_{n}=2\delta_{n}MR$ so that $\epsilon_{n}$ tends to $0$. Let us show that, for all $n$, the aggregate flow ${\mathbf{x}^{*}_{n}}'$ in $F_{0}$ satisfies
\begin{equation}\label{converge}
\left\langle\,\mathbf{c}({\mathbf{x}^{*}_{n}}'),\,\mathbf{y}-\mathbf{y}^{*}_{n}\,\right\rangle+\sum^{L}_{k=1}\left\langle\,\hat{\mathbf{c}}^{k}({\mathbf{x}^{*}_{n}}'),\,\mathbf{x}^{k}-\mathbf{x}^{*k}_{n}\,\right\rangle\geq -\epsilon_{n}, \quad \forall\; \mathbf{x}= (\mathbf{y},\,\mathbf{x}^{1},\,\ldots,\,\mathbf{x}^{L})\in F_{0}.
\end{equation}

Indeed, for any $\mathbf{x}= (\mathbf{y},\,\mathbf{x}^{1},\,\ldots,\,\mathbf{x}^{L}) \in F_{0}$, one can find $\mathbf{x}_{n}=(\mathbf{x}^{k}_{n})^{K_{n}}_{k=0} \in F_{n}$ such that
\begin{equation}\label{induce}
\begin{cases}
\,\mathbf{x}^{k}_{n}=\mathbf{x}^{k}, \quad &k=1,\ldots, L;\\
\,x^{0}_{n,r}+\sum^{K_{n}}_{k=L+1}x^{k}_{n,r}=y_{n,r}, \quad &\forall\; r\in \mathcal{R}.
\end{cases}
\end{equation}
For example, take $x^{0}_{n,r}=y_{n,r}T^{0}/\tilde{T}^{0}$, $x^{k}_{n,r}=y_{n,r}T^{k}/\tilde{T}^{0}$ for $k=L+1,\,\ldots,\,K_{n}$. Then, by \eqref{VI_n},
\begin{align*}
& \bigl\langle\,\mathbf{c}(\mathbf{x}^{*}_{n}),\,\mathbf{x}^{0}_{n}-\mathbf{x}^{*0}_{n}\,\bigr\rangle+\sum^{K_{n}}_{k=1}\bigl\langle\,\hat{\mathbf{c}}^{k}(\mathbf{x}^{*}_{n}),\,\mathbf{x}^{k}_{n}-\mathbf{x}^{*k}_{n}\,\bigr\rangle\geq 0 \\
\Rightarrow \; & \bigl\langle\,\mathbf{c}(\mathbf{x}^{*}_{n}),\,\mathbf{x}^{0}_{n}-\mathbf{x}^{*0}_{n}\,\bigr\rangle+\sum^{L}_{k=1}\bigl\langle\,\hat{\mathbf{c}}^{k}(\mathbf{x}^{*}_{n}),\,\mathbf{x}^{k}_{n}-\mathbf{x}^{*k}_{n}\,\bigr\rangle+\sum^{K_{n}}_{k=L+1}\bigl\langle\,\mathbf{c}(\mathbf{x}^{*}_{n})+\mathbf{x}^{*k}_{n}\dot{\mathbf{c}}(\mathbf{x}^{*}_{n}),\,\mathbf{x}^{k}_{n}-\mathbf{x}^{*k}_{n}\,\bigr\rangle\geq 0\\
\Rightarrow  \;& \Bigl\langle\,\mathbf{c}(\mathbf{x}^{*}_{n}),\,\mathbf{x}^{0}_{n}+\sum^{K_{n}}_{k=L+1}\mathbf{x}^{k}_{n}-\mathbf{x}^{*0}_{n}-\sum^{K_{n}}_{k=L+1}\mathbf{x}^{*k}_{n}\,\Bigr\rangle+\sum^{L}_{k=1}\bigl\langle\,\hat{\mathbf{c}}^{k}(\mathbf{x}^{*}_{n}),\,\mathbf{x}^{k}_{n}-\mathbf{x}^{*k}_{n}\,\bigr\rangle\\
\displaybreak[3]
& \geq -\sum^{K_{n}}_{k=L+1}\bigl\langle\,\mathbf{x}^{*k}_{n}\dot{\mathbf{c}}(\mathbf{x}^{*}_{n}),\,\mathbf{x}^{k}_{n}-\mathbf{x}^{*k}_{n}\,\bigr\rangle \geq -\sum^{K_{n}}_{k=L+1}\bigl\langle\,\delta_{n}M,\,\mathbf{x}^{k}_{n}-\mathbf{x}^{*k}_{n}\,\bigr\rangle\\
&= -\,\Bigl\langle\,\delta_{n}M,\,\sum^{K_{n}}_{k=L+1}\mathbf{x}^{k}_{n}-\sum^{K_{n}}_{k=L+1}\mathbf{x}^{*k}_{n}\,\Bigr\rangle \geq -2\delta_{n}MR.
\end{align*}

By \eqref{induce}, this is just $\langle\,\mathbf{c}({\mathbf{x}^{*}_{n}}'),\,\mathbf{y}-\mathbf{y}^{*}_{n}\,\rangle+\sum^{L}_{k=1}\bigl\langle\,\hat{\mathbf{c}}^{k}({\mathbf{x}^{*}_{n}}'),\,\mathbf{x}^{k}-\mathbf{x}^{*k}_{n}\,\bigr\rangle\geq -\epsilon_{n}$.

As $F_{0}$ is a compact subset of $\mathbb{R}^{R\times(1+L)}$, $\{{\mathbf{x}^{*}_{n}}'\}_{n\in \mathbb{N}^{*}}$ in $F_{0}$ admits accumulation points. For any convergent subsequence of $\{{\mathbf{x}^{*}_{n}}'\}_{n\in \mathbb{N}^{*}}$ (which is still denoted by $\{{\mathbf{x}^{*}_{n}}'\}_{n\in \mathbb{N}^{*}}$ for simplicity), let $\tilde{\mathbf{x}}= (\tilde{\mathbf{y}},\,\tilde{\mathbf{x}}^{1},\,\ldots,\,\tilde{\mathbf{x}}^{L})$ be its accumulation point. Let $n$ tend to infinity in \eqref{converge}. Then, by the continuity of the marginal cost functions and the fact that $\epsilon_{n}$ tends to 0,
\begin{equation*}
\left\langle\,\mathbf{c}(\tilde{\mathbf{x}}^{L}),\,\mathbf{y}-\tilde{\mathbf{y}}\,\right\rangle+\sum^{L}_{k=1}\left\langle\,\hat{\mathbf{c}}^{k}(\tilde{\mathbf{x}}),\,\mathbf{x}^{k}-\tilde{\mathbf{x}}^{k}\,\right\rangle\geq 0,\quad \forall\; \mathbf{x}= (\,\mathbf{y},\,\mathbf{x}^{1},\,\ldots,\,\mathbf{x}^{L}\,) \in F_{0}.
\end{equation*}
According to \eqref{VI_limit}, this implies that $\tilde{\mathbf{x}}=\mathbf{x}^{*}$.

Therefore, ${\mathbf{x}^{*}_{n}}'$ converges to $\mathbf{x}^{*}$ as $n$ tends to infinity. This induces immediately that $Y^{k}(\mathbf{x}^{*}_{n})$ tends to $Y^{k}(\mathbf{x}^{*})$ for $k=1,\ldots,L$, and $Y^{k}(\mathbf{x}^{*}_{n})$ tends to $Y^{0}(\mathbf{x}^{*})$ for $k=0$ and $k>L$.
\end{proof}
\begin{remark}\rm
When $T^{0}=0$ and $L=0$, Theorem~\ref{limit_mixed} shows that the NE of an atomic splittable game with only coalitions and no individuals converges to the WE of the corresponding nonatomic game, when the coalitions split into smaller and smaller ones. This result is obtained by Haurie and Marcotte~\cite{Hau85}, but only for the case where the coalitions split into equal-size ones. Theorem~\ref{limit_mixed} is an extension of their result in three aspects. First, the coalitions do not have equal size. Second, the games are composite. Finally, some coalitions are fixed at a nonnegligible weight.
\end{remark}
\section{Some problems for future research}\label{sect7}
This section presents some directions for further studies.
\subsection{Backward induction}
Consider a two-stage extensive form game with an underlying network ($\mathcal{R},\,\mathbf{c}$) and a set of nonatomic individuals $[\,0,\,1]$. At the first stage, the individuals are given $K+1$ choices $\{s_{0},\,s_{1},\,\ldots,\,s_{K}\}$, where $K$ is a fixed number in $\mathbb{N}^{*}=\mathbb{N}\setminus \{0\}$. The players who choose $s_{0}$ are called individuals, and those who choose $s_{k}$ ($1\leq k \leq K$) are considered as members of coalition $k$. If there are $L$ coalitions having nonnegligible weights ($0\leq L\leq K$) then, at the second stage, the individuals and $L$ coalitions play the composite routing game $\Gamma(\mathcal{R},\mathbf{c},\mathbf{T})$. By Lemma~\ref{me_flow_1}, the players who choose $s_{0}$ at the first stage have the lowest cost at the end. Therefore, by a backward induction, the only subgame perfect equilibrium of this two-stage game consists in having all the players choosing $s_{0}$ at the first stage.

\subsection{Composition-decision games}
In a two-player composition-decision game, each player is atomic with a splittable flow. The weight of player $I$ is $T$, while that of player $II$ is $1-T$. Each player chooses a pair of representatives consisting of a coalition and a group of individuals, whose total weight is her own weight. The cost to each player is defined as the average equilibrium cost to her representatives in a composite routing game played by all the representatives. Consider two simple models where the game reduces to a one-player game.
\paragraph{Model 1: One atomic player faces individuals.}
Player $I$ has two strategies. Strategy 1 consists in choosing a coalition of weight $T$, while strategy 2 consists in choosing a group of individuals of weight $T$. Player $II$ always chooses a group of individuals of weight $1-T$.

If player $I$ chooses strategy 1, the costs to the two players are, respectively, the equilibrium cost to the unique coalition and that to the individuals in the composite game $\Gamma(\mathcal{R},\mathbf{c},(1-T;T))$, i.e. $Y^{1}(T)$ for player $I$, $Y^{0}(T)$ for player $II$. If player $I$ chooses strategy 2, both the costs to the two players are $W$, the equilibrium cost in the nonatomic game $\Gamma(\mathcal{R},\mathbf{c})$.

Theorem~\ref{efficiency_size}(ii) shows that, if $0<T\leq \tilde{T}$, then the two strategies make no difference to player $I$. If $\tilde{T}<T<1$, player $I$'s only best reply is strategy 1, and her cost is $Y^{1}(T)$, which is lower than $W$. Strategy 1 dominates strategy 2.
\paragraph{Model 2: One group faces a coalition.} Player $I$ has the same two strategies as in Model 1. Player $II$ always chooses a coalition of weight $1-T$.

If player $I$ chooses strategy 1, the costs to the two players are the equilibrium costs in the two-coalition game $\Gamma(\mathcal{R},\mathbf{c},(0;T,1-T))$. If player $I$ chooses strategy 2, the costs to the two players are, respectively, the equilibrium cost to the individuals and that to the unique coalition in the composite game $\Gamma(\mathcal{R},\mathbf{c},(T;1-T))$, i.e. $Y^{0}(1-T)$ for player $I$, $Y^{1}(1-T)$ for player $II$.

Does strategy 1 still dominate strategy 2? The answer is negative. In the second part of Example~\ref{ex:comp}, strategy 2 is a better response than strategy 1.

\section*{Acknowledgments}
I am most grateful to Professor Sylvain Sorin for all his advice, help and patience throughout the writing of this work. I thank the editors and the two anonymous referees of {\em Mathematics of Operations Research} for their helpful suggestions and remarks.

\end{document}